\documentclass[11pt]{article}
\usepackage{amsthm, amssymb, amsmath}
\usepackage{verbatim, float}
\usepackage{mathrsfs}
\usepackage{graphics}
\usepackage[usenames,dvipsnames]{pstricks}
\usepackage{epsfig}
\usepackage{pst-grad} % For gradients
\usepackage{pst-plot} % For axes
\usepackage{cases}
\usepackage{bm}
\usepackage[toc,page]{appendix}
\usepackage{algorithm,algorithmic}
\usepackage{dsfont}
\usepackage{algorithm,algorithmic}

\usepackage[top=1.4cm, bottom=1.4cm, left=2cm, right=2.5cm]{geometry}
\usepackage{url}

\usepackage{hhline}

\setcounter{MaxMatrixCols}{20}

%%%%%%%%%%%%%%%%%%%%%%%%%%%%%%%%%%%
%to disable CAPITAL letters in table caption
\usepackage{etoolbox}
\makeatletter
\patchcmd{\@makecaption}
  {\scshape}
  {}
  {}
  {}
\makeatother
%%%%%%%%%%%%%%%%%%%%%%%%%%%%%%%%%%%

%%%%%%%%%%%%%%%%%%%%%%%%%%%%%%%%%%%
%to allow table caption to align left
\usepackage[singlelinecheck=false]{caption}
%%%%%%%%%%%%%%%%%%%%%%%%%%%%%%%%%%%%
\usepackage{diagbox}

\theoremstyle{plain}
\newtheorem{theorem}{Theorem}
\newtheorem{lemma}{Lemma}

\newtheorem{corollary}{Corollary}
\newtheorem{proposition}{Proposition}
\newtheorem{claim}{Claim}
\newtheorem{construction}{Construction}

\theoremstyle{definition}
\newtheorem{definition}{Definition}
\newtheorem{example}{Example}

%%%%%%%%%%%%%%%%%%%%%%%%%%%%%%%%%%%%%%%%%%%%%%%%%%%%%
%\mathcal{}

\newcommand{\nchoosek}[2]{\left(\begin{array}{c}#1\\#2\end{array}\right)}

\newcommand{\A}{{\mathcal A}}
\newcommand{\B}{{\mathcal B}}
\newcommand{\C}{{\mathcal C}}

\newcommand{\X}{{\mathcal X}}

\renewcommand{\S}{{\mathcal S}}

%\mathcal{}
%%%%%%%%%%%%%%%%%%%%%%%%%%%%%%%%%%%%%%%%%%%%%%%%%%%%%
%%%%%%%%%%%%%%%%%%%%%%%%%%%%%%%%%%%%%%%%%%%%%%%%%%%%%
%\boldsymbol{}

%\boldsymbol
%%%%%%%%%%%%%%%%%%%%%%%%%%%%%%%%%%%%%%%%%

%%%%%%%%%%%%%%%%%%%%%%%%%%%%%%%%%%%%%%%%%%%%%%%%%%%%%%%%
%\mathbb

\newcommand{\fq}{\mathbb{F}_q}

%\mathbb
%%%%%%%%%%%%%%%%%%%%%%%%%%%%%%%%%%%%%%%%%%%%%%%%%%%%%%%%

%%%%%%%%%%%%%%%%%%%%%%%%%%%%%%%%%%%%%%%%%%%%%%%%%%%%%%%
%\mathfrak

%\mathfrak
%%%%%%%%%%%%%%%%%%%%%%%%%%%%%%%%%%%%%%%%%%%%%%%%%%%%%%%

%%%%%%%%%%%%%%%%%%%%%%%%%%%%%%%%%%%%%%%%%%%%%%%%%%%%%%%%
%Arrows

%Arrows
%%%%%%%%%%%%%%%%%%%%%%%%%%%%%%%%%%%%%%%%%%%%%%%%%%%%%%%% 

%%%%%%%%%%%%%%%%%%%%%%%%%%%%%%%%%%%%%%%%%%%%%%%%%%%%%%%%
%Greek characters

%Greek characters
%%%%%%%%%%%%%%%%%%%%%%%%%%%%%%%%%%%%%%%%%%%%%%%%%%%%%%%%

%%%%%%%%%%%%%%%%%%%%%%%%%%%%%%%%%%%%%%%%%%%%%%%%%%%%%%%%
%Miscellaneous

%%%%%%%%%%%%%%%%%%%%%%%%%%%%%%%%%%%%%%%%%%%%%%%%%%%%%
%\mathscr
%\newcommand{\ell}{{\mathscr{A}}}

\newcommand{\cR}{{\mathscr{R}}}

%\mathscr
%%%%%%%%%%%%%%%%%%%%%%%%%%%%%%%%%%%%%%%%%%%%%%%%%%%%%
\newcommand{\define}{\stackrel{\mbox{\tiny $\triangle$}}{=}}

%%% New notations %%%%%%%%%%

\begin{document}

\title{Set-Codes with Small Intersections and Small Discrepancies}
\author{
R. Gabrys, H. S. Dau, C. J. Colbourn and O. Milenkovic\\
R. Gabrys and O. Milenkovic are with the Coordinated Science Laboratory (CSL), \\University of Illinois at Urbana-Champaign (UIUC)\\
C. J. Colbourn is with Arizona State University\\
H. Dau is with Monash University, Melbourne, Australia \\
%The work was performed while he was with CSL, UIUC.\\
{\tt ryan.gabrys@gmail.com, hoang.dau@monash.edu},\\ {\tt colbourn@asu.edu, milenkov@illinois.edu}.
}
%\author{Hoang Dau and Olgica Milenkovic}
\date{}
\maketitle
%\pagestyle{empty}
%%%%%%%%%%%%%%%%%%%%%%%%%%%%%%%%%%

\begin{abstract} We are concerned with the problem of designing large families of subsets over a common labeled ground set that have small pairwise intersections and the property that the maximum discrepancy of the label values within each of the sets is less than or equal to one. Our results, based on transversal designs, factorizations of packings and Latin rectangles, show that by jointly constructing the sets and labeling scheme, one can achieve optimal family sizes for many parameter choices. Probabilistic arguments akin to those used for pseudorandom generators lead to significantly suboptimal results when compared to the proposed combinatorial methods. The design problem considered is motivated by applications in molecular data storage and theoretical computer science.  
\end{abstract}
%\textcolor{red}{need to update this story a little bit to incorporate the work which will come later...}
\section{Introduction}

In his seminal work~\cite{beck1981balanced}, Beck introduced the notion of the discrepancy of a finite family of subsets over a finite ground set as the smallest integer $d$ for which the elements
in the ground set may be labeled by $\pm 1$ so that the sum of labels in each subset is at most $d$ in absolute value. 
Set discrepancy theory, or set bicoloring theory, has since been studied and generalized in a number of different directions~\cite{doerr2003multicolour,lovasz1986discrepancy,muthukrishnan2012optimal} and it has found applications in areas as diverse as pseudorandomness and independent permutation generation~\cite{armoni1996discrepancy,saks2000low}, $\epsilon$-approximations and geometry~\cite{matouvsek1993discrepancy}, bin packing, lattice approximations and graph spectra~\cite{rothvoss2013approximating,doerr2001lattice,solymosi2009incidences}.

The goal of these, and almost all other studies of discrepancies of set families, was to establish bounds on the largest size of families of $d$-discrepancy sets for a given ground set, or to construct large set families with prescribed discrepancy values. The sets were assumed to have no special structural constraints other than those that ensure desired discrepancy properties. An exception in this context is the work of Colbourn et al.~\cite{colbourn1999bicoloring} concerning the problem of \emph{bicoloring} Steiner triple systems (STSs)~\cite{colbourn2010crc}. Steiner triple systems are set systems in which the subsets of interest satisfy additional intersection constraints, ensuring that each pair of distinct elements of the ground set appears in exactly one subset of the system. The key finding is that STSs are inherently not possible to bicolor, as detailed in~\cite{colbourn2010crc} and the follow-up work~\cite{milazzo2003strict}. 

Recently, the authors proposed a number of coding techniques for molecular storage platforms~\cite{ygm17,yazdi2015rewritable,yazdi2017portable,yazdi2018mutually,kiah2016codes,gabrys2017asymmetric,milenkovic2018exabytes}. In one such paradigm, data is recorded in terms of the locations of nicks (cuts) in naturally occurring DNA strands. In order to correct readout errors, information is encoded into sets of nicking positions that have small overlaps, i.e., into sets with small intersections. As the DNA-strand is of the form of a double helix, i.e., composed of two strands (a sense and antisense strand), nicks may be introduced on either of the two entities. To prevent disassociation of the strands due to nicking, it is desirable to distribute the nicks equally on both strands. Such a problem setting leads to a constructive set discrepancy problem, in which one desires to construct a large family of subsets of a ground set that have 
small intersection and small discrepancy $d$. In this setting, each set gives rise to two codewords in which the labels are complements of each other, as nicks on different strands can be easily distinguished from each other in the sequencing and alignment phase~\cite{nicking}.

To address the problem, we proceed in two directions. First, we examine existing (near-optimally) sized families of sets with small intersections, such as the Bose-Bush and Babai-Frankl families~\cite{bose1952orthogonal,BabaiFrankl2004} and Steiner systems~\cite{colbourn2010crc}. For the former case, we show that one can achieve the smallest possible discrepancy ($d=0$ for even-sized sets and $d=1$ for odd-sized sets) in a natural manner, by using the properties of the defining polynomials of the sets. Second, we generalize the results of~\cite{colbourn1999bicoloring} to show that no Steiner system can have optimal discrepancy. We establish upper bounds on the size of optimal discrepancy intersecting families and then proceed to describe several constructions based on \emph{packings} that have optimal discrepancy values. In the process, we invoke graph-theoretic arguments, properties of orthogonal arrays and ideas similar to those used in the Bose and Skolem constructions~\cite{Bose1939,Skolem1958}. Note that an alternative approach to address the code construction problem for fixed set sizes is to use specialized ternary constant weight codes, but we find the set discrepancy formulation easier to work with and generalize.

The paper is organized as follows. In Section~\ref{sec:problem} we precisely formulate the problem and provide simple arguments that show how near-optimal families of sets with small intersection constructed by Bose-Bush and Babai-Frankl may be balanced. Section~\ref{sec:bounds} is devoted to studying upper bounds on the size of optimal discrepancy families of intersecting sets, while Section~\ref{sec:construction} discusses various set constructions. 
%We conclude our exposition by describing a number of extensions of the problem in the Conclusions~\ref{sec:conclusions}.

\section{Problem Statement and a Sampling of Results} \label{sec:problem}

Let $[v]$ denote the set of integers $\{1,2,\ldots,v\}$. A family of subsets of $[v]$, $\mathcal{F}_v=\{{F_1,\ldots,F_s\}}$, $s\geq 2$, is \emph{$k$-regular} if for all $1 \leq j \leq s$, $|F_j|=k$. Otherwise, the family is {\em irregular}. 
The sets in $\mathcal{F}_v$ have {\em $t$-bounded intersections} if for all pairs of distinct integers $i,j \in [s],$ $|F_i \cap F_j|\leq t-1$. Naturally, when $\mathcal{F}_v$ is $k$-regular, we require that $t<k$. %Similar constraints hold for irregular families $\mathcal{F}_v$.

Let $L: [v] \to \{+1,-1\}$ be a labeling of the elements in $[v]$. The {\em discrepancy} of a set $F_j \in \mathcal{F}_v$ under the labeling $L$ is $D_L(F_j)=\sum_{i \in F_j} L(i)$. For fixed values of $v$ and $t$, our goal is to find the largest size $s$ of a $t$-bounded intersection family $\mathcal{F}_v$ for which there exists a labeling $L$ such that $D_L(F_j)  \in \{-1, 0, +1\}$ for all $1 \leq j \leq s$. We refer to such a set system as an \emph{extremal balanced family}.

Sets with bounded pairwise intersections have been extensively studied in the past~\cite{BabaiFrankl2004,colbourn2010crc,Stinson2004}. Well-known examples include Steiner systems, corresponding to $k$-regular families $\mathcal{F}_v$ with the property that each $t$-subset of $[v]$ belongs to exactly one member of the family~\cite{colbourn2010crc}, and $\mathcal{C}$-intersecting families of sets described in~\cite{bose1952orthogonal,BabaiFrankl2004}. In the latter, the cardinalities of the intersections of the sets are restricted to lie in a predetermined set $\mathcal{C}$. Very little is known about discrepancies of intersecting sets and extremal balanced families in particular. The problem at hand is difficult, and it appears hard to construct extremal families for arbitrary parameter values. We therefore mostly focus on some specific choices of the parameter sets.
%Nevertheless, with the exception of a small number of results pertaining to the impossibility of coloring of $3$-regular Steiner systems (Steiner triple systems)~\cite{colbourn1999bicoloring} 

Some families of $t$-bounded intersection sets with (near) optimal size have inherently simple labelings that ensure the balancing property. 
We identify one such family, described in~\cite{bose1952orthogonal,BabaiFrankl2004}, and in what follows describe a simple proof that this family is indeed a balanced family. 

\textbf{Construction \cite[Thm. 4.11]{BabaiFrankl2004}.}
Let $q$ be a prime power and $1 \leq t \leq k \leq q$. Set $v = kq$.  Let $\xi$ be a primitive element of the finite field $\fq$ and set $\A = \{0,1,\xi,\ldots,\xi^{k-2}\}$, so that $|\A| = k$. For each polynomial $f \in \fq[x]$, define a set of pairs of elements
\[
A_f \define \{(a,f(a)) \colon a \in \A\}.
\]
Then $|A_f| = k$. Let
\[
\C(k,q) \define \{A_f \colon f \in \fq[x], \deg(f) \leq t-1\}.
\]
Then $\C(k,q)$ is a collection of $q^t$ $k$-subsets of the set $\X \define \A \times \fq$ such that every two sets intersect in at most $t-1$ elements, because two distinct polynomials of degree $\leq t-1$ cannot intersect in more than $t-1$ points. When $v$ is large, the Babai-Frankl construction requires $q = v/k$ to be large as well.

The Ray-Chauduri-Wilson Theorem~\cite{BabaiFrankl2004} (Theorem 4.10), asserts that the size of any family $\mathcal{F}_v$ of $k$-regular sets with $k \geq t$ whose pairwise intersection cardinalities lie in some set of cardinality $t$ that satisfies $|\mathcal{F}_v| \leq \binom{v}{t}$. As an example, the set of all $t$-subsets of $[v]$ forms a $(t-1)$-intersection bounded $t$-regular family. 
This result can be strengthened when the set of allowed cardinalities equals $\{{0,1,\ldots,t-1\}}$, provided that $v > 2k^2$, for fixed $k$ and $t$. This is the best known such bound, and it is met by the Babai-Frankl construction described above. It is easy to see that the size of the family is roughly equal to $\frac{v^t}{k^t}$, which has the same growth rate with respect to $v$ as the approximate upper bound of Ray-Chauduri and Wilson $\frac{v^t}{(2k)^t}$, and dominates the other terms provided that $v$ is sufficiently large and $k,t$ are kept constant.
%\textcolor{blue}{Should we simplify this? Basically i think we can just say that we assign $+1$ to elements in $\A \times \mathbb{F}_q$ based only on $\A$.}
\begin{proposition} 
There exists a labeling $L$ of points $(a,f(a))$ of the set $\X$ so that every set
in $\C(k,q,s)$ has discrepancy equal to $0$ when $k$ is even, and discrepancy equal to $\pm 1$ when $k$ is odd. 
\end{proposition}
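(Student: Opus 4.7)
The plan is to exploit the following structural feature of the Babai--Frankl sets: every $A_f \in \C(k,q)$ contains exactly one pair whose first coordinate equals $a$, for each $a \in \A$. Consequently, if the labeling of $\X = \A \times \fq$ is chosen to depend only on the first coordinate, then the discrepancy $D_L(A_f)$ becomes independent of $f$, reducing a seemingly combinatorial problem over all polynomials of degree $\leq t-1$ to a single parity choice on $\A$.

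Concretely, I would partition $\A$ into two subsets $\A_+$ and $\A_-$ whose sizes differ by at most one, that is, $|\A_+| = \lceil k/2 \rceil$ and $|\A_-| = \lfloor k/2 \rfloor$, and define
\[
L(a,b) \;=\; \begin{cases} +1, & a \in \A_+, \\ -1, & a \in \A_-, \end{cases}
\qquad \text{for all } (a,b) \in \X.
\]
Then for every polynomial $f \in \fq[x]$ with $\deg f \leq t-1$, using that $A_f$ consists of one pair $(a,f(a))$ per $a \in \A$, one obtains
\[
D_L(A_f) \;=\; \sum_{a \in \A} L(a, f(a)) \;=\; |\A_+| - |\A_-|.
\]
The right-hand side equals $0$ when $k$ is even and $\pm 1$ when $k$ is odd, which is exactly the claim.

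There is essentially no obstacle; the only content is the observation that a column-constant labeling collapses the discrepancy to $|\A_+|-|\A_-|$ regardless of $f$. The simplicity is afforded by the product structure $\X = \A \times \fq$ together with the fact that each $A_f$ is the graph of a function on $\A$, so that each first-coordinate ``slice'' is hit exactly once per set, independent of which low-degree polynomial is chosen.
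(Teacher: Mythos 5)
Your proof is correct and is essentially the paper's argument in cleaner form: the paper's labeling via the threshold on $\sigma(a,b)=qr(a)+r(b)$ is in fact determined only by the first coordinate $a$ (as the paper itself remarks immediately after its proof), and amounts to exactly your partition of $\A$ into $\A_+$ and $\A_-$ of sizes $\lceil k/2\rceil$ and $\lfloor k/2\rfloor$. Your direct computation $D_L(A_f)=|\A_+|-|\A_-|$ captures the same content without the intermediate bijection to $[0,n-1]$.
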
 
\begin{proof} 
We prove the statement by constructing a suitable labeling. First, we map the elements of the set $\X$ to $[0,n-1]$ as follows. Let $r$ be a mapping such that $r(0) = 0$ and $r(a) = m+1$ if $a = \xi^m \neq 0$. Then $r(a) \in [0,k-1]$ for every $a \in \A$ and
$r(b) \in [0,q-1]$ for every $b \in \fq$. The pair $(a,b) \in \X = \A \times \fq$ is mapped to $\sigma(a,b) = qr(a) + r(b) \in [0,n-1]$. This mapping is a bijection. 

%\textcolor{blue}{Do we want to simplify this? We don't think need the map $\sigma$ for instance...}
\textbf{Case 1:} $k$ is even. We claim that for every set $A_f$, half of the elements are mapped to $[0,n/2-1]$ while the other half are mapped to $[n/2,n-1]$. Indeed, for $a \in \{0,1,\xi,\ldots,\xi^{k/2-2}\} \subset \A$ and $b = f(a) \in \fq$, the pair $(a,b)$ is mapped to 
\[
\sigma(a,b) = qr(a) + r(b) \leq q(k/2-1) + (q-1) = n/2-1. 
\] 
For $a \in \{\xi^{k/2-1},\ldots,\xi^{k-2}\} \subset \A$ and $b = f(a)$, we have
\[
\sigma(a,b) = qr(a) + r(b) \geq qk/2 + 0 = n/2. 
\]
This proves the above stated property regarding the mapping $\sigma$. Based on this result, we can now associate $\pm 1$ values with the elements of $\X$ according to the following rule: assign $-1$ to $(a,b)$ if $\sigma(a,b) < n/2$, and $+1$ to $(a,b)$ if $\sigma(a,b) \geq n/2$. Then, every set in $\C(k,q,s)$ has half of the elements mapped to $-1$ and half mapped to $+1$. Equivalently, the discrepancy of every set is equal to $0$. 

\textbf{Case 2:} $k$ is odd. Similarly as for the case when $k$ is even, we can easily show that for every set $A_f$, the first $(k+1)/2$ elements are mapped to $[0,n/2+q/2-1]$, and the last $(k-1)/2$ elements are mapped to $[n/2+q/2,n-1]$. Thus, the following $\pm 1$ assignment guarantees that each set in $\C(k,q,s)$ has element sum equal to $-1$ (or by symmetry, $+1$): assign $-1$ to $(a,b)$ if $\sigma(a,b) < n/2+q/2$ and $+1$ to $(a,b)$ if $\sigma(a,b) \geq n/2 + q/2$. 

This completes the proof of the proposition. 
\end{proof} 
A few remarks are in order. First, the labeling presented is solely based on the set $\A$, and the first entry in each pair partitions the set $\X$ into $k$-groups. Hence, the balancing property is inherited from the partition of $\A$, which suggests a close connection with constructions of \emph{transversal designs.} 
%\comm{In fact, the Babai-Frankl construction is really just the Bose-Bush construction of orthogonal arrays.  Perhaps this should be attributed more to the earlier work of Bose-Bush than to Babai-Frankl?}

%An {\em orthogonal array} $OA_\lambda(t,k,v)$ is a $\lambda v^t \times k$ array with entries from a $v$-set, so that within every tuple of $t$ columns, and each choice $T$ of the $v^t$ possible entries of symbols, we find in these columns exactly $\lambda$ rows equal to $T$. We are interested in the case when $\lambda = 1$, and we write $OA(t,k,v)$ in this case.
%It is well known (see the Hedayat-Sloane-Stufken book) that:  
%\begin{itemize}
%\item Whenever $q$ is a prime power and $1 \leq t \leq k \leq q+1$, an $OA(t,k,q)$ exists.
%\item Whenever $q$ is a power of 2 and $1 \leq  k \leq q+2$, an $OA(3,k,q)$ exists.
%\item For any positive integers $v$ and $t$, an $OA(t,t+1,v)$ and an $OA(t,t,v)$ exist.
%\end{itemize}
%An $OA(t,k,v)$ is equivalent to 
A {\em transversal design} $TD(t,k,v)$\footnote{A $TD(t,k,v)$ is equivalent to a $OA(t,k,v)$ orthogonal array, which in turn is equivalent to $k-2$ mutually orthogonal Latin squares of order $n$ when $t=2$~\cite{hedayat2012orthogonal}. We refer to all these entities as transversal designs.} consists of 
\begin{enumerate}
\item A set $V$ of $kv$ elements (called points); 
\item A partition of $V$ into sets $\{ G_i : i \in [k] \},$ where each $G_i$ contains $v$ points and is called a group; 
\item A set $\B$ of $k$-subsets called blocks for which a) every block and every group intersect in exactly one point (``blocks are transverse to groups''); and b) every $t$-subset of $V$ either occurs in exactly one block or contains two or more points from a group (but not both).  
\end{enumerate}
Because no $t$-subset of elements can appear in two or more blocks, any two distinct blocks of a $TD(t,k,v)$ intersect in at most $t-1$ elements. It is well known that
\begin{itemize}
\item Whenever $q$ is a prime power and $1 \leq t \leq k \leq q+1$, there exists a $TD(t,k,q)$.
\item Whenever $q$ is a power of $2$ and $1 \leq  k \leq q+2$, there exists a $TD(3,k,q)$.
\item For any positive integers $v$ and $t$, both $TD(t,t+1,v)$ and $TD(t,t,v)$ exist.
\end{itemize}
Whenever a $TD(t,k,v)$ exists, one can assign positive labels to the points in half of the groups and negative labels to the points in the other half of the groups when $k$ is even, or nearly half when $k$ is odd. This automatically leads to a well-defined balanced family of sets, as necessarily every block meets every group in a single point. It is straightforward to see that the Babai-Frankl construction produces a transversal design, as  outlined in~\cite{stinson1981general}. However, this construction is not optimal in general, since it may be possible to add $k$-blocks to the design that intersect the groups in more than one point.

%\textcolor{red}{We can talk about this. Just want to make sure i understand.}
One can add additional $k$-blocks to the designs as follows. For simplicity, assume that $k$ is even and that $t \geq 3$. Pick one group of the design that lies within the set of positively labeled elements $P_{+}$ and one group of the design that lies within the set of negatively labeled elements $P_{-}$. There are $\left(\frac{k}{2}\right)^2$ such pairs of groups. By construction, any $k$-subset with $\frac{k}{2}$ points from the first group and $\frac{k}{2}$ points from the second group intersects each block of the TD in at most two points. Furthermore, each pair of such blocks intersects in at most $\lceil \frac{k}{2} \rceil$ points, so that as long as $t> \max\{{\frac{k}{2},2\}}$, the augmented TD is both balanced and satisfies the intersection constraint\footnote{We can generalize this argument to form new blocks using $s>2$ groups, half of which are labeled $+1$ and half of which are labeled $-1$. As long as $t > \max\{{\frac{k}{s}, s\}}$, the new blocks are valid provided that any two collections of $s$ groups have fewer than $\frac{t}{k/s}$ groups in common. To avoid notational clutter, we assume that $\frac{k}{s}$ is an integer.}. This observation illustrates the fact that extremal balanced families of sets with small intersections cannot be directly derived from TDs. 
%\textcolor{blue}{Not sure i understand the argument. I agree though that you can clearly augment. For instance you can pick $\frac{k}{2}$ elements from one single group labeled $+1$ and another $\frac{k}{2}$ elements from another single group labeled $-1$. Is this what you are saying?}

Henceforth, we focus on investigating the problem of jointly constructing large intersecting families with labelings that ensure that the set discrepancies are contained in $\{{0,\pm 1\}}$. Since it is known that Steiner triple systems cannot be balanced, i.e., that for any bicoloring of a Steiner triple system there exists one ``monochromatic'' set (i.e., a set with discrepancy $+3$ or $-3$)~\cite{colbourn1999bicoloring}, we focus our attention on \emph{packings} instead~\cite{colbourn2010crc}. 

A packing $\mathcal{C}(t,k,v)$ with parameters $(t,k,v)$ is a $k$-regular family of subsets $\mathcal{F}$ of $[v]$ with the property that each $t$-element subset of $[v]$ appears in \textit{at most} one subset. This automatically ensures that any two distinct $F_i,F_j \in \mathcal{F}_v$ satisfy $|F_i \cap F_j| \leq t-1$. It is customary to refer to the subsets as blocks, and we employ both terms. In the sections to follow, we establish the existence of packings with perfect balancing properties based on explicit constructions that rely on orthogonal arrays and factorizations of graphs~\cite{bujtas2015transversal}.
%Furthermore, in the second part of the paper, we relax the regularity assumption and allow the block size $k$ to take values in a set $C$. In both settings, we first establish upper bounds on the size of balanced families and then proceed to propose both combinatorial and probabilistic constructions for the set families. \textcolor{blue}{I still think that perhaps we can get some results for the case where the sets have varying sizes. I was reading the Babi-Frankl book again and looking over the RW proof and i don't think it'd be too difficult to first generate an upper bound on the sizes of these systems and then maybe give some constructive lower bounds based upon maybe some ideas elsewhere in the paper.}
%\section{$k$-Regular packings}
 %For example, for $t=2$ and for $p$ a prime and $k$ an integer such that $2\leq k \leq p$, an $TD(2,k,p)$~\cite{henry2013practical} may be obtained by choosing points of the form $(x,y),$ where $0 \leq a \leq k-1$ and $y \in \mathbb{Z}_p$. There are $p^2$ blocks in the design, 
%$$F_{i,j} =\{{(x,ix+j \, (\text{ mod } \, p)): 0 \leq x \leq k-1\}}.$$
%Clearly, one may replace $\mathbb{Z}_p$ by the finite field $\mathbb{F}_n$, so that the transversal designs $TD(2, k, n)$ for which $n$ is a prime power and which have $n^2$ blocks reduce to those of the Babai-Frankl construction (and the construction generalizes to higher order polynomials). 

The problem of determining large families of $t$-bounded intersecting sets has also been independently studied in the theoretical computer science literature, where such sets were considered for generating pseudorandom strings~\cite{trevisan2001extractors}. Most approaches use the probabilistic method. In one such setting~\cite{trevisan2001extractors}, the ground set $[v]$ is divided into $k$ disjoint intervals of size $\frac{v}{k}$. A subset $S \in [v]$ is termed ``structured'' if it contains exactly one element from each interval. A  structured set $S$ is generated by picking uniformly at random, with replacement, one element from each interval and adding it to $S$. A probabilistic argument reveals that there exists a set of $\left(\frac{vt}{k^2}\right)^t$ structured sets, each pair of which intersects in at most $t$ positions. This bound, compared to the Ray-Chauduri-Wilson bound, is smaller by a factor of $\left(\frac{t}{k^2}\right)^t$, but balancing the sets is even easier: points in half of the intervals can be labeled by $+1$ and points in the other half by $-1$ (or vice versa). If the number of intervals is even, the discrepancy of each set is $0$; if the number of intervals is odd, the discrepancy is $\pm 1$.  

\section{Upper Bounds} \label{sec:bounds}

We first derive upper bounds on the size of extremal balanced regular packings (which we refer to as balanced packings for shorthand), and then proceed to establish constructive lower bounds for some given choices of parameters. 

For ease of notation, for a given labeling $L$, let $P_{+} = \{ i \in [v] : L(i) = +1\}$, $P_{-} = \{ i \in [v] : L(i) = -1\}$, $p_{+} = |P_{+}|$, and $p_{-} = |P_{-}| = v-p_+$. 
Without loss of generality, we assume that $p_{+} \geq p_{-}$.
%A Steiner system $S(t,k,v)$ is a set of $k$-element subsets (or blocks) of $[v]$ with the property that each $t$-element subset of $[v]$ appears in \textit{exactly one} block.
We use $A(t,k,v)$ to denote the size of an extremal balanced packing family with parameters $(t,k,v)$.
%Suppose (\ref{eq:label}) holds for some system $packing(t,k,v)$, and labeling $L$ where $packing(t,k,v)$ has maximal size. Then we refer to $|packing(t,k,v)|$ as $A(t,k,v)$. We refer to a system $packing(t,k,v)$ as a $(t,k,v)$ bi-colored system. 
The following simple upper bound on $A(t,k,v)$ is based on standard counting arguments.

\begin{lemma}\label{lem:ub} %For $t+1 > \lceil \frac{k}{2} \rceil$, 
%$$A(n,t,k) \leq \nchoosek{P_{+}}{\lceil \frac{k}{2} \rceil} \nchoosek{P_{-}}{t+1-\lceil \frac{k}{2} \rceil}.$$ 
For any labeling $L$ with label classes of size $p_{+}$ and $p_{-}$ such that $p_{+} \geq 2 \lceil \frac{t+1}{2} \rceil$, and for $t < k$, one has
$$ A(t,k,v) \leq  \frac{ \nchoosek{p_{+}}{\lfloor t/2 \rfloor}  \nchoosek{p_{-}}{\lceil t/2 \rceil}}{ \nchoosek{\lceil k/2 \rceil}{\lfloor t/2 \rfloor}  \nchoosek{\lfloor k/2 \rfloor}{\lceil t/2 \rceil}}.$$
%In either case, $A(n,t,k) < \nchoosek{n}{k} \big / \nchoosek{k}{t+1}$.
\end{lemma}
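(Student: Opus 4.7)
The plan is to apply a standard double-counting argument, but restricted to $t$-subsets of $[v]$ whose internal label-distribution matches the most ``balanced'' split forced by the discrepancy constraint.

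First, I would observe that since every block $B$ satisfies $|B|=k$ and $D_L(B)\in\{-1,0,+1\}$, the two quantities $|B\cap P_+|$ and $|B\cap P_-|$ must be $\{\lfloor k/2\rfloor,\lceil k/2\rceil\}$ in some order (if $k$ is even, both equal $k/2$). Next, I would call a $t$-subset $S\subseteq[v]$ of \emph{type} $(a,b)$ if $|S\cap P_+|=a$ and $|S\cap P_-|=b$, and I would concentrate on the single type $(\lfloor t/2\rfloor,\lceil t/2\rceil)$. The total number of such subsets in $[v]$ is clearly
\[
\binom{p_+}{\lfloor t/2\rfloor}\binom{p_-}{\lceil t/2\rceil},
\]
and the hypothesis $p_+\ge 2\lceil (t+1)/2\rceil$ (together with $p_+\ge p_-$ and the fact that each block occupies $\lfloor k/2\rfloor$ elements of each label class) guarantees that both binomials are positive so the counting is non-degenerate.

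Then I would count, for a single block $B$, how many such type-$(\lfloor t/2\rfloor,\lceil t/2\rceil)$ subsets it contains. Depending on whether $|B\cap P_+|$ equals $\lceil k/2\rceil$ or $\lfloor k/2\rfloor$, the count is either
\[
\binom{\lceil k/2\rceil}{\lfloor t/2\rfloor}\binom{\lfloor k/2\rfloor}{\lceil t/2\rceil}\quad\text{or}\quad\binom{\lfloor k/2\rfloor}{\lfloor t/2\rfloor}\binom{\lceil k/2\rceil}{\lceil t/2\rceil}.
\]
A short ratio calculation (trivial when $t$ or $k$ is even, and reducing to comparing $\tfrac{(k+1)/2-\lceil t/2\rceil}{(k+1)/2-\lfloor t/2\rfloor}$ with $1$ when both are odd) shows that the first quantity is always at most the second. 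Hence every block contains \emph{at least} $\binom{\lceil k/2\rceil}{\lfloor t/2\rfloor}\binom{\lfloor k/2\rfloor}{\lceil t/2\rceil}$ type-$(\lfloor t/2\rfloor,\lceil t/2\rceil)$ subsets.

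Finally, I would invoke the packing property: any $t$-subset of $[v]$ lies in at most one block of $\mathcal{F}$, so summing the per-block lower bound over all blocks cannot exceed the global count of type-$(\lfloor t/2\rfloor,\lceil t/2\rceil)$ subsets. Solving the resulting inequality for $|\mathcal{F}|$ yields the desired upper bound on $A(t,k,v)$. The only non-trivial step is the binomial comparison in the previous paragraph; everything else is bookkeeping. The subtle design choice — and the reason the bound improves on the naive $\binom{v}{t}/\binom{k}{t}$ packing bound — is that restricting attention to the most evenly split $t$-subsets exploits the fact that the discrepancy constraint forces every block to look essentially balanced across $P_+$ and $P_-$.
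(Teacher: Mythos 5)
Your proof is correct and follows essentially the same double-counting argument as the paper: count the $t$-subsets of type $(\lfloor t/2\rfloor,\lceil t/2\rceil)$ globally and per block, then invoke the packing property. You are in fact slightly more careful than the paper, which silently assumes every block has its $\lceil k/2\rceil$-element half inside $P_{+}$; your ratio comparison showing $\binom{\lceil k/2\rceil}{\lfloor t/2\rfloor}\binom{\lfloor k/2\rfloor}{\lceil t/2\rceil}\leq\binom{\lfloor k/2\rfloor}{\lfloor t/2\rfloor}\binom{\lceil k/2\rceil}{\lceil t/2\rceil}$ is precisely the step needed to also cover blocks whose larger half lies in $P_{-}$.
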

\begin{proof} First, there are at most $\nchoosek{p_{+}}{\lfloor t/2 \rfloor}  \nchoosek{p_{-}}{\lceil t/2 \rceil}$ $t$-element subsets which contain $\lfloor t/2 \rfloor$ elements that are labeled $+1$ and $\lceil t/2 \rceil$ elements that are labeled $-1$. For every balanced set $B$ in the underlying packing, there are $\lfloor k/2 \rfloor$ and $\lceil k/2 \rceil$ of its elements that lie in the sets $P_{-}$ and $P_{+}$, respectively, or vice versa. For simplicity, let $B_{+} \subset B$ denote the set of elements that have the label $+1$, with $|B_{+}| = \lceil \frac{k}{2} \rceil$. Let $B_{-} = B \setminus B_{+},$ with $|B_{-}| = \lfloor \frac{k}{2} \rfloor$. From each such $B$, we can generate 
$$\nchoosek{|B_{+}|}{\lfloor \frac{t}{2} \rfloor}  \nchoosek{|B_{-}|}{\lceil \frac{t}{2} \rceil}=\nchoosek{\lceil k/2 \rceil}{\lfloor t/2 \rfloor}  \nchoosek{\lfloor k/2 \rfloor}{\lceil t/2 \rceil}
$$ 
subsets of size $t$. Since each $t$-element subset of $B$ appears only once, we have 
$$\nchoosek{\lceil k/2 \rceil}{\lfloor \frac{t}{2} \rfloor}  \nchoosek{\lfloor k/2 \rfloor}{\lceil \frac{t}{2} \rceil}  A(t,k,v) \leq \nchoosek{p_{+}}{\lfloor t/2 \rfloor}  \nchoosek{p_{-}}{\lceil t/2 \rceil},$$
which proves the claim.
 \end{proof}
 As may be observed from Lemma~\ref{lem:ub}, the maximum size of a regular $(t,k,v)$ packing depends only on the values of $P_{+}$ and $P_{-}$. The next simple corollary establishes which values of these parameters maximize the upper bound. The obtained bound has the same asymptotic growth as the Ray-Chaudhuri-Wilson bound, $\left(\frac{v}{k}\right)^t$, although it addresses both the intersection \emph{and} discrepancy constraints. It also shows that the construction by Bose-Bush and Frankl-Babai is near extremal in terms of satisfying both joint intersection and balancing conditions, although the construction itself was proposed for addressing intersection constraints only.
 
 \begin{corollary}\label{cor:UB} For $t < k$,
 $$ A(t,k,v) \leq \frac{\nchoosek{\lceil v/2 \rceil}{\lfloor t/2 \rfloor}  \nchoosek{\lfloor v/2 \rfloor}{\lceil t/2 \rceil}}{\nchoosek{\lceil k/2 \rceil}{\lfloor t/2 \rfloor}  \nchoosek{\lfloor k/2 \rfloor}{\lceil t/2 \rceil}} .$$
 \end{corollary}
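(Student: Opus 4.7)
The plan is to combine Lemma~\ref{lem:ub} with a short monotonicity argument on the label-class sizes. Any extremal balanced family is realized by \emph{some} labeling $L$ with label-class sizes $p_+\geq p_-$ and $p_++p_-=v$, so applying Lemma~\ref{lem:ub} to this particular labeling gives
\[
A(t,k,v)\;\leq\;\frac{\binom{p_+}{\lfloor t/2\rfloor}\binom{p_-}{\lceil t/2\rceil}}{\binom{\lceil k/2\rceil}{\lfloor t/2\rfloor}\binom{\lfloor k/2\rfloor}{\lceil t/2\rceil}}.
\]
The denominator is independent of the labeling, so it suffices to show that the numerator
\[
g(x)\;\define\;\binom{x}{\lfloor t/2\rfloor}\binom{v-x}{\lceil t/2\rceil}
\]
is maximized, over integers $x\in[\lceil v/2\rceil,v-\lceil t/2\rceil]$, at $x=\lceil v/2\rceil$. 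Plugging this maximizer into the inequality above yields precisely the bound claimed in the corollary.

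The core step is a one-line ratio test. First I would compute
\[
\frac{g(x+1)}{g(x)}\;=\;\frac{x+1}{x+1-\lfloor t/2\rfloor}\cdot\frac{v-x-\lceil t/2\rceil}{v-x},
\]
and observe that this is at most $1$ iff $(x+1)\lceil t/2\rceil\geq(v-x)\lfloor t/2\rfloor$. For every $x\geq\lceil v/2\rceil$ we have $x+1\geq v-x$, and we always have $\lceil t/2\rceil\geq\lfloor t/2\rfloor$, so the inequality holds. Consequently $g$ is non-increasing on the relevant range and attains its maximum at $x=\lceil v/2\rceil$, as desired.

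There is no serious obstacle in this proof; the main thing to be careful about is to avoid a case split on the parity of $v$ and $t$. The ratio-based argument handles all four parity combinations uniformly and, in particular, accommodates the asymmetry $\lceil t/2\rceil\neq\lfloor t/2\rfloor$ when $t$ is odd, which would otherwise prevent one from simply invoking symmetry of $g$ around $v/2$. A minor edge case is the regime where the extremal labeling has $p_+<2\lceil(t+1)/2\rceil$ so that Lemma~\ref{lem:ub} does not apply directly; this only occurs for very small $v$, for which the stated bound is essentially trivial and can be handled separately.
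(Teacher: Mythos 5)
Your proof is correct and matches the paper's intended argument: the paper states the corollary without proof, merely remarking that it identifies the label-class sizes that maximize the bound of Lemma~\ref{lem:ub}, and your ratio test supplies exactly the missing monotonicity verification (the computation of $g(x+1)/g(x)$ and the resulting condition $(x+1)\lceil t/2\rceil \geq (v-x)\lfloor t/2\rfloor$ both check out). The only loose end---the regime where the extremal labeling has $p_+ < 2\lceil (t+1)/2\rceil$ so that Lemma~\ref{lem:ub} does not directly apply---is one the paper itself silently glosses over, and you flag it appropriately.
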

 
In what follows, we focus our attention on constructing balanced $(t,k,v)$ packings that meet the bound from Corollary~\ref{cor:UB} with equality. Given that one can perfectly balance the Babai-Frankl sets, the natural question arises if all or some Steiner systems, in which every $t$-subset is required to appear in exactly one block, can be perfectly balanced. Theorem~\ref{th:stbalanced} states that the answer to this question is negative: perfectly balanced systems necessarily have cardinalities smaller than that of Steiner systems with the same parameters. This result complements and extends the findings of~\cite{colbourn1999bicoloring} which pertain to Steiner triple systems only and are considered in the setting of colorability of Steiner systems. 

Our proof of Theorem~\ref{th:stbalanced} follows from Lemmas~\ref{lem:even} and \ref{lem:odd} which show that the size of an balanced packing with parameters $(t,k,v)$ is strictly less than the size of a Steiner system, denoted $S(t,k,v)$, with the same parameters for the case where $t$ is even and $t$ is odd, respectively. Both proofs make use of the following claim.

\begin{claim}\label{cl:recurse} Suppose there exists a balanced Steiner system $S(t,k,v)$ where $t \geq 4$. Then there exists a balanced Steiner system $S(t-2, k-2, v-2)$.
\end{claim}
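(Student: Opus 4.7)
The plan is to apply the classical derived-design construction at a pair of points carrying opposite labels. Given a balancing labeling $L$ of the hypothesized Steiner system $S(t,k,v)$, I would pick $x \in P_{+}$ and $y \in P_{-}$, and take as the blocks of the candidate $S(t-2,k-2,v-2)$ the family
\[
\{B \setminus \{x,y\} : B \in S(t,k,v),\; \{x,y\} \subseteq B\},
\]
viewed as $(k-2)$-subsets of the ground set $[v]\setminus\{x,y\}$, endowed with the restriction of $L$.

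First I would show that both $P_{+}$ and $P_{-}$ are nonempty, so that $x$ and $y$ can be chosen as above. Since $t \geq 4$ and a Steiner system requires $k > t$, every block has at least $k \geq 5$ elements, and a discrepancy of at most $1$ in absolute value then forces each block to contain elements of both labels; in particular $P_{+}$ and $P_{-}$ are both nonempty. Next I would verify that the derived family is a Steiner system $S(t-2,k-2,v-2)$: every $(t-2)$-subset $T \subseteq [v] \setminus \{x,y\}$ extends to the $t$-subset $T \cup \{x,y\}$, which lies in a unique block $B$ of the original Steiner system, and $B \setminus \{x,y\}$ is the unique block of the derived design containing $T$. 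The block size is $k-2$ and the ground set has cardinality $v-2$, as required.

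Finally, balancedness of the derived design is immediate from the choice of $x$ and $y$. For any derived block $B \setminus \{x,y\}$,
\[
D_L(B \setminus \{x,y\}) = D_L(B) - L(x) - L(y) = D_L(B) \in \{-1,0,+1\},
\]
since $L(x) + L(y) = (+1) + (-1) = 0$. There is no substantive obstacle in the argument; the whole proof rests on two observations, namely that the balance hypothesis forces both label classes to be nonempty when $k \geq 5$, and that deleting two points with \emph{opposite} labels leaves the discrepancy of every containing block unchanged. The only point worth emphasizing is that picking $x$ and $y$ with opposite labels is essential, because removing two equally labeled points would shift discrepancies by $\pm 2$ and could destroy balance.
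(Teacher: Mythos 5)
Your proposal is correct and matches the paper's own argument: both take the derived design at a pair of points $e_1 \in P_{+}$, $e_2 \in P_{-}$, and observe that deleting two oppositely labeled points preserves each block's discrepancy. You simply spell out the routine verifications (nonemptiness of both label classes and the Steiner property of the derived design) that the paper leaves implicit.
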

\begin{proof} Assume that $S(t,k,v)$ is balanced with label classes $P_{+}$ and $P_{-}$. A system $S(t-2,k-2,v-2)$ can be obtained by selecting two elements $e_1 \in P_{+}$ and $e_2 \in P_{-}$. Then, the balanced Steiner system $S(t-2,k-2,v-2)$ can be obtained by selecting $\Big \{ B \setminus \{ e_1, e_2 \} : B \in S(t,k,v), \{ e_1, e_2 \} \subset B \Big \}.$
More precisely, the $S(t-2,k-2,v-2)$ is formed by taking the blocks that contain both elements $e_1$ and $e_2$ and by subsequently removing the elements $e_1, e_2$ from each of the blocks.
\end{proof}

As a result of Claim~\ref{cl:recurse}, if there exists a $(t,k,v)$ balanced system $S(t,k,v)$, then there exists a $(2,k-t+2,v-t+2)$ balanced system $S(2,k-t+2,v-t+2)$ if $t$ is even. Lemma~\ref{lem:even}  uses this claim to handle the case where $t$ is even. The case where $t$ is odd requires a little more work and is handled in Lemma~\ref{lem:odd}. The next claim is used in the proofs of Lemma~\ref{lem:even} and \ref{lem:odd}. 

%Recall that $P_{+} \subset [v]$ denotes the set of elements labeled $+1$ and that $P_{-} \subset [v]$ denotes the set of elements labeled $-1$. We continue to use $p_{+} = |P_{+}|$ and $p_{-} = |P_{-}|$.

\begin{claim}\label{cl:cases} Suppose the Steiner system $S(2,k,v)$ is an extremal $(t,k,v)$ balanced system for odd $k$ where $v>k$. Then, 
$$ |p_{+}-p_{-}| \leq 1.$$
\end{claim}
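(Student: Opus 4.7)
The plan is to exploit the parity constraint that arises when $k$ is odd. Since $L(x) \in \{\pm 1\}$ and $|B| = k$ is odd, $D_L(B)$ is always odd, so the balancing condition $D_L(B) \in \{-1,0,+1\}$ collapses to $D_L(B) \in \{\pm 1\}$. In particular, $D_L(B)^2 = 1$ for every block $B$, so $\sum_B D_L(B)^2 = b$, where $b = v(v-1)/(k(k-1))$ is the number of blocks of the Steiner system $S(2,k,v)$.

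Next I would expand $D_L(B)^2 = k + 2\sum_{\{x,y\} \subseteq B} L(x)L(y)$ and sum over all blocks. The Steiner system property (with $v > k$) says that every unordered pair from $[v]$ appears in exactly one block, so the resulting double sum collapses to $\sum_{\{x,y\} \subseteq [v]} L(x)L(y)$. Using the identity $\bigl(\sum_{x \in [v]} L(x)\bigr)^2 = v + 2\sum_{\{x,y\} \subseteq [v]} L(x)L(y)$ and the fact that $\sum_x L(x) = p_+ - p_-$, this pair-sum simplifies to $\tfrac{1}{2}\bigl((p_+ - p_-)^2 - v\bigr)$. Equating the two expressions for $\sum_B D_L(B)^2$ then gives $b = bk + (p_+ - p_-)^2 - v$.

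Rearranging and substituting the value of $b$ yields $(p_+ - p_-)^2 = v - b(k-1) = v(k - v + 1)/k$. Since $v > k$ by hypothesis, the right-hand side is non-positive, which forces $p_+ = p_-$, and in particular $|p_+ - p_-| = 0 \leq 1$, as required.

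There is no serious obstacle; the argument is essentially a second-moment computation on block discrepancies. The one conceptual point worth flagging is the choice to work with squared discrepancies rather than with the linear double count $\sum_B D_L(B) = r(p_+ - p_-)$, where $r = (v-1)/(k-1)$ is the replication number. The linear version only gives $|p_+ - p_-| \leq b/r = v/k$, which fails to imply the claim once $v \geq 2k$; the quadratic version extracts the tighter equality that drives the conclusion.
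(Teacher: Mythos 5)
Your proof is correct, and it takes a genuinely different route from the paper's. The paper argues by contradiction through the counting bound of Lemma~\ref{lem:ub}: if $|p_+-p_-|\ge 2$ then $p_+p_-\le v^2/4-1$, so the upper bound on $A(2,k,v)$ falls strictly below the Steiner block count $v(v-1)/(k(k-1))$, contradicting the assumption that the Steiner system itself is a balanced system of that size. You instead run a second-moment identity: oddness of $k$ forces $D_L(B)^2=1$ on every block, and summing $D_L(B)^2=k+2\sum_{\{x,y\}\subseteq B}L(x)L(y)$ over all $b$ blocks, using that each pair lies in exactly one block, gives $(p_+-p_-)^2=v-b(k-1)=v(k-v+1)/k$, which is nonpositive for $v>k$. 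One small point of hygiene: for $v>k+1$ the right-hand side is strictly negative, so what you have actually shown is that no balanced $S(2,k,v)$ with $k$ odd exists at all in that range (the claim then holds vacuously), while $v=k+1$ forces $p_+=p_-$; either way $|p_+-p_-|\le 1$, so the claim follows. This is strictly more information than the paper extracts here --- your identity rules out balanced Steiner systems $S(2,k,v)$ for odd $k$ and $v>k+1$ outright, uses neither the extremality hypothesis nor the monotonicity-in-$v$ argument the paper leans on, and is arguably cleaner --- and your closing remark that the linear count $\sum_B D_L(B)=r(p_+-p_-)$ only yields $|p_+-p_-|\le v/k$ correctly identifies why the quadratic count is the right tool.
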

\begin{proof} Suppose, to the contrary, that the result does not hold. Define $\hat{k} = \lceil \frac{k}{2} \rceil$. From Lemma~\ref{lem:ub}, 
$$ A(2,k,v) \leq \frac{ \frac{v^2}{4} - 1}{\hat{k}(\hat{k}-1)}.$$
On the other hand, since $S(2,k,v)$ is a Steiner system, we have
$$ |S(2,k,v)| = \frac{v(v-1)}{(2\hat{k}-1)(2\hat{k}-2)}.$$
Since $v>k$, we have $v \geq 2\hat{k}$. For a fixed $k \geq 1$, since $|S(2,k,v)|-A(2,k,v)$ increases as $v$ increases, it follows that the difference $|S(2,k,v)|-A(2,k,v)$ is minimized when $v=2\hat{k}$. However, in this case, $|S(2,k,v)|-A(2,k,v) \geq \frac{1}{k(k-1)} > 0$ and so the result follows since $k \geq t = 2$.
\end{proof}

We are now ready to handle the case where $t$ is even. 
%Recall that we assume that $P_{+} \subset [v]$ are the elements, which are labeled $+1$ and that $P_{-}$ are the elements from $[v]$ labeled $-1$. Also, we let $P_{+} = |P_{+}|$ and $P_{-} = |P_{-}|$.
  \begin{lemma}\label{lem:even} For even $t$ such that $t<k<v$ where $p_{+} > \frac{k+1}{2}$, and $v>2$, one has
 $$ A(t,k,v) < |S(t,k,v)|.$$
 \end{lemma}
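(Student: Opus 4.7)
The plan is to argue by contradiction, assuming that a balanced Steiner system $S(t,k,v)$ exists with label classes of sizes $p_+$ and $p_-$ satisfying $p_+ > (k+1)/2$; since every balanced packing is a packing, the desired inequality $A(t,k,v) < |S(t,k,v)|$ fails exactly when such a balanced Steiner system exists, so this is the only case one must exclude.

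The first step is a reduction to the base case $t = 2$. Write $t = 2s$ and apply Claim~\ref{cl:recurse} iteratively $s-1$ times: at each step choose a pair $e_1 \in P_+$, $e_2 \in P_-$ and pass to the derived Steiner system on the blocks containing $\{e_1,e_2\}$, which is balanced with parameters $(t-2,k-2,v-2)$ and label classes of sizes $(p_+ - 1, p_- - 1)$. Balance of $S(t,k,v)$ forces every block to contain at least $\lfloor k/2 \rfloor \ge s$ elements of each label, so in particular $p_- \ge s > s-1$ and all iterations are feasible. The outcome is a balanced Steiner system $S(2, k^*, v^*)$ with $k^* = k - t + 2 \ge 3$, $v^* = v - t + 2 > k^*$, and label-class sizes $p_{\pm}^* = p_{\pm} - (s-1)$; the hypothesis $p_+ > (k+1)/2$ propagates to $p_+^* > (k^*+1)/2$.

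The second step derives the contradiction at the base case. Combining Lemma~\ref{lem:ub} (equivalently Corollary~\ref{cor:UB}) with the AM-GM bound $p_+^* p_-^* \le (v^*)^2/4$ gives
\[
\frac{v^*(v^*-1)}{k^*(k^*-1)} \;=\; |S(2,k^*,v^*)| \;\le\; A(2,k^*,v^*) \;\le\; \frac{(v^*)^2}{4\lceil k^*/2\rceil\lfloor k^*/2\rfloor}.
\]
When $k^*$ is even the right side equals $(v^*/k^*)^2$, which is strictly smaller than $v^*(v^*-1)/(k^*(k^*-1))$ whenever $v^* > k^*$ (cross-multiply to get $v^* > k^*$), a contradiction. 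When $k^*$ is odd, a parallel cross-multiplication produces a contradiction whenever $v^* > k^* + 1$; the sole remaining possibility $v^* = k^* + 1$ is excluded by observing that $|S(2,k^*,k^*+1)| = (k^*+1)/(k^*-1)$ is integral only for $k^* = 3$, and $S(2,3,4)$ does not exist since any two triples on four points must share at least two points, violating the Steiner condition.

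The main obstacle is the boundary case $v^* = k^* + 1$ with $k^*$ odd: here the arithmetic estimate only barely fails and one must appeal directly to the non-existence of the specific Steiner system $S(2,3,4)$ to close the argument. A secondary subtlety is verifying that the hypothesis $p_+ > (k+1)/2$ is indeed strong enough to propagate through all $s-1$ reductions and leave the reduced system in a parameter regime where Lemma~\ref{lem:ub} applies and the final strict inequality can be extracted.
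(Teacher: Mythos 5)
Your proof is correct, and while your reduction step (iterating Claim~\ref{cl:recurse} down to a balanced $S(2,k-t+2,v-t+2)$) coincides with the paper's, your base-case argument is genuinely different. The paper works inside the derived Steiner system with exact double-counting identities --- equating the block count obtained from pairs within $\hat{P}_+$ with the count from mixed pairs, and for odd $k$ solving a two-equation system for $|B_+|,|B_-|$ and then for $\hat{k}$ --- and then eliminates the surviving cases $\hat{p}_+-\hat{p}_-\in\{0,1\}$ supplied by Claim~\ref{cl:cases} using Claim~\ref{cl:k=p} ($\hat{p}_+>\hat{k}$), which is precisely where the hypothesis $p_+>\frac{k+1}{2}$ enters. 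You instead compare the Steiner block count $\frac{v^*(v^*-1)}{k^*(k^*-1)}$ directly against the mixed-pair bound of Lemma~\ref{lem:ub} after an AM--GM step; a single cross-multiplication then handles both parities of $k^*$ and dispenses with Claims~\ref{cl:cases} and~\ref{cl:k=p} (and with the hypothesis $p_+>\frac{k+1}{2}$) entirely, which is a real simplification. The price is the boundary case $v^*=k^*+1$ with $k^*$ odd, where the inequality is not strict and you must separately invoke the non-existence of $S(2,3,4)$; you do this correctly. Two small points to tighten: Lemma~\ref{lem:ub} carries the side condition $p_+\ge 2\lceil\frac{t+1}{2}\rceil$, so for small $p_+^*$ you should either check it or observe that the underlying mixed-pair count applies to the specific derived system without that hypothesis; and your feasibility check for the $s-1$ reductions (every balanced block meets each label class in at least $\lfloor k/2\rfloor\ge s$ points, so $p_-\ge s$) is a detail the paper omits but is worth keeping.
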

 \begin{proof} Suppose now that the bound in the statement of Lemma~\ref{lem:even} does not hold. This implies that there exists a balanced $S(2,k-t+2,v-t+2)$ system by repeatedly applying the procedure from Claim~\ref{cl:recurse}. For shorthand, let $\hat{P}_+$ be the set of elements in $S(2,k-t+2,v-t+2)$ labeled $+1$, and let $\hat{P}_{-}$ be the set of elements labeled $-1$, and $\hat{p}_+=|\hat{P}_+|$, and $\hat{p}_{-} = |\hat{P}_{-}|$. Also, let $\hat{k}=\lceil \frac{k-t+2}{2} \rceil$. As before, we assume $P_{+}$ are the elements in $[v]$ labeled $+1$ for the system $S(t,k,v)$.

We start with the following simple claim.

\begin{claim}\label{cl:k=p} $\hat{p}_+ > \hat{k}$. \end{claim}
\begin{proof} If $S(2,k-t+2,v-t+2)$ is obtained from $S(t,k,v)$ by applying the procedure described in Claim~\ref{cl:recurse}, then $\hat{p}_+ = p_{+} - \frac{t}{2} + 1 > \frac{k+1-t+2}{2}$, which implies $\hat{p}_+ > \lceil \frac{k-t+2}{2} \rceil$ as desired.
\end{proof}

Suppose first that $k$ is even, which (since $t$ is also even) implies that each block in $S(2,k-t+2,v-t+2)$ has $\hat{k}= \frac{k-t+2}{2}$ elements labeled $+1$ and $\hat{k}$ elements labeled $-1$. Then, the bound in Lemma~\ref{lem:ub}, which appears on the RHS of the equation that follows, along with the assumption that every $(k-t+2)$-block of $S(2,k-t+2,v-t+2)$ has $\hat{k}$ elements in $\hat{P}_+$, any pair of which uniquely identifies a block requires that

\begin{align}\label{eq:state}
\frac{\nchoosek{\hat{p}_+}{2}}{\nchoosek{\hat{k}}{2}} =\frac{\hat{p}_+  \hat{p}_-}{\hat{k}^2}.
\end{align}
This, in turn, implies that 
$$\frac{\hat{p}_+-1}{\hat{k}-1} = \frac{\hat{p}_-}{\hat{k}}.$$ 
According to Claim~\ref{cl:cases}, we only need to consider the cases where $\hat{p}_+=\hat{p}_-$ or $\hat{p}_+ = \hat{p}_- + 1$. If $\hat{p}_+=\hat{p}_-$, then $\frac{\hat{p}_+-1}{\hat{k}-1} = \frac{\hat{p}_+}{\hat{k}}$, which implies that $\hat{p}_+ = \hat{k}$, and from Claim~\ref{cl:k=p} this cannot hold. If $\hat{p}_+ = \hat{p}_- + 1$, then $\frac{\hat{p}_-}{\hat{k}} = \frac{\hat{p}_-}{\hat{k}-1}$, also a contradiction.

Next, we consider the case when $k$ is odd.  Then, every $(k-t+2)$-block has either $\hat{k}$ elements from $\hat{P}_{-}$ (which implies that the block comprises $\hat{k}-1$ elements from $\hat{P}_{-}$) or $\hat{k}-1$ elements from $\hat{P}_{-}$ (and so the block has $\hat{k}$ elements from $\hat{P}_{+}$). Let $B_{+} \subset S(2,k-t+2,v-t+2)$ denote the blocks in $S(2,k-t+2,v-t+2)$ that have $\hat{k}$ elements from $\hat{P}_{+}$ and let $B_{-} \subset S(2,k-t+2,v-t+2)$ denote the blocks in $S(2,k-t+2,v-t+2)$ that have $\hat{k}$ elements from $\hat{P}_{-}$. Clearly, $|B_{+}| + |B_{-}|=\frac{(\hat{p}_+ + \hat{p}_2)(\hat{p}_+ + \hat{p}_2-1)}{(2\hat{k}-1)(2\hat{k}-2)}$ equals the total number of blocks in $S(2,k-t+2,v-t+2)$, since the total number of blocks in the Steiner system $S(2,k-t+2,v-t+2)$ is $\frac{(\hat{p}_+ + \hat{p}_2)(\hat{p}_+ + \hat{p}_2-1)}{(2\hat{k}-1)(2\hat{k}-2)}$. Note that the total number of pairs of elements in $\hat{P}_{+}$ equals the sum of the number of pairs of elements in $\hat{P}_{+} \cap B_{+}$ and the number of pairs elements in $\hat{P}_{+} \cap B_{-}$, so that
\begin{align}\label{eq:B1}
|B_{+}| \nchoosek{\hat{k}}{2} + |B_{-}| \nchoosek{\hat{k}-1}{2} = \nchoosek{\hat{p}_+}{2}.
\end{align}
Similarly, by counting the number of pairs in $\hat{P}_{-}$, we get
\begin{align}\label{eq:B2}
|B_{+}| \nchoosek{\hat{k}-1}{2} + |B_{-}| \nchoosek{\hat{k}}{2} = \nchoosek{\hat{p}_-}{2}.
\end{align}
From (\ref{eq:B1}), we have
$$|B_{+}| = \frac{\nchoosek{\hat{p}_+}{2} - |B_{-}|  \nchoosek{\hat{k}-1}{2}}{\nchoosek{\hat{k}}{2}}.$$ 
Substituting $|B_{+}|$ into (\ref{eq:B2}), we get
\begin{align*}
|B_{-}| = \frac{\nchoosek{p_{-}}{2} - \nchoosek{p_{+}}{2} \nchoosek{\hat{k}-1}{2} / \nchoosek{\hat{k}}{2}}{\nchoosek{\hat{k}}{2} - \nchoosek{\hat{k}-1}{2} \nchoosek{\hat{k}-1}{2} / \nchoosek{\hat{k}}{2}},
\end{align*}
which simplifies to 
$$|B_{-}|= \frac{\hat{p}_+(\hat{k}-2) - \hat{p}_+^2(\hat{k}-2) + \hat{p}_2 (\hat{p}_2-1)\hat{k}}{4(\hat{k}-1)^2}.$$
%\textcolor{red}{Can you please double check, as I think you missed a constant as above?}
Next,
\begin{align*}
|B_{+}| + |B_{-}| = \frac{ \nchoosek{\hat{p}_+}{2} - |B_{-}|  \nchoosek{\hat{k}-1}{2}}{\nchoosek{\hat{k}}{2}} + |B_{-}| =   \frac{\hat{p}_+^2 + \hat{p}_2^2 - \hat{p}_+ - \hat{p}_2}{2(\hat{k}-1)^2}.
\end{align*}
Since the number of blocks in $S(2,k-t+2,v-t+2)$ equals $\frac{(\hat{p}_++\hat{p}_2)(\hat{p}_++\hat{p}_2-1)}{(2\hat{k}-1)(2\hat{k}-2)}$, we have
\begin{align*}
\frac{\hat{p}_+^2 + \hat{p}_2^2 - \hat{p}_+ - \hat{p}_2}{2(\hat{k}-1)^2} =  \frac{(\hat{p}_++\hat{p}_2)(\hat{p}_++\hat{p}_2-1)}{(2\hat{k}-1)(2\hat{k}-2)}.
\end{align*}
First, note that $\hat{k}>1$. Otherwise, if $\hat{k} = 1$, then since $\hat{k} = \frac{k+1}{2} - \frac{t}{2} + 1$, this implies that $k+1 = t$, a contradiction. Thus, we can multiply both sides of the previous equation by $\hat{k}-1$ and simplify to get that the following must hold:
$$ \frac{\hat{p}_+^2 + \hat{p}_2^2 - \hat{p}_+ - \hat{p}_2}{\hat{k}-1} =  \frac{(\hat{p}_++\hat{p}_2)(\hat{p}_++\hat{p}_2-1)}{2\hat{k}-1}. $$

Solving for $\hat{k}$, we get
\begin{align*}
\hat{k} = \frac{2 \hat{p}_+ \hat{p}_-}{2 \hat{p}_+\hat{p}_- - \hat{p}_+^2 - \hat{p}_-^2 + \hat{p}_+ + \hat{p}_-}.
\end{align*}

According to Claim~\ref{cl:cases}, we only need to consider the cases where $\hat{p}_+=\hat{p}_-$ or $\hat{p}_+ = \hat{p}_- + 1$. If $\hat{p}_+ = \hat{p}_-$, then $\hat{k} = \hat{p}_+$, which contradicts Claim~\ref{cl:k=p}. If $\hat{p}_+ = \hat{p}_- + 1$, then we also have that $\hat{k} = \hat{p}_+$, which is another contradiction. This completes the proof.
\end{proof}
The next lemma establishes the result for odd $t$. The ideas behind the proof are similar to the proof of the previous lemma and can be found in Appendix~\ref{app:lemodd}.
 
 \begin{lemma}\label{lem:odd} For odd $t$, such that $t<k<v$ where $p_{+} > \frac{k+1}{2}$ and $v > 2$, we have
 $$ A(t,k,v) < |S(t,k,v)|.$$
 \end{lemma}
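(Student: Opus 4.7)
The plan is to parallel the proof of Lemma~\ref{lem:even}, but to recurse all the way down to a Steiner $3$-design rather than a Steiner $2$-design, exploiting the fact that Claim~\ref{cl:recurse} preserves the parity of $t$. Suppose toward contradiction that $A(t,k,v) = |S(t,k,v)|$, so that a balanced $S(t,k,v)$ exists. Applying Claim~\ref{cl:recurse} iteratively $(t-3)/2$ times yields a balanced Steiner $3$-design $S(3,k',v')$ with $k' = k-t+3$ and $v' = v-t+3$. Writing $\hat{p}_{+}, \hat{p}_{-}$ for the sizes of the label classes in the reduced system and $\hat{k} = \lceil k'/2 \rceil$, the hypothesis $p_{+} > (k+1)/2$, together with the fact that each recursion removes one element from each class, yields $\hat{p}_{+} = p_{+} - (t-3)/2 > (k'+1)/2 \geq \hat{k}$, the direct analog of Claim~\ref{cl:k=p}.

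The next step is to establish an analog of Claim~\ref{cl:cases} at $t=3$: an extremal balanced $S(3,k',v')$ must satisfy $|\hat{p}_{+} - \hat{p}_{-}| \leq 1$. This would follow by the same template as Claim~\ref{cl:cases}, comparing the Lemma~\ref{lem:ub} bound $\hat{p}_{+}\nchoosek{\hat{p}_{-}}{2}/(\hat{k}\nchoosek{\lfloor k'/2 \rfloor}{2})$ against the exact Steiner $3$-design block count $v'(v'-1)(v'-2)/(k'(k'-1)(k'-2))$, and verifying that the gap is strictly positive once $|\hat{p}_{+} - \hat{p}_{-}| \geq 2$.

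The bulk of the argument is a counting calculation over the four triple types $+{+}{+}$, $+{+}{-}$, $+{-}{-}$, $-{-}{-}$, each of which must appear in exactly one block of $S(3,k',v')$. When $k'$ is even, every balanced block contributes the same numbers $\nchoosek{k'/2}{3}$, $\nchoosek{k'/2}{2}(k'/2)$, $(k'/2)\nchoosek{k'/2}{2}$, $\nchoosek{k'/2}{3}$ of each type; equating the all-$+$ and all-$-$ counts forces $\hat{p}_{+} = \hat{p}_{-}$, and then dividing the mixed-triple identity by the monochromatic identity forces $\hat{p}_{-} = \hat{k}$, contradicting the previous paragraph. When $k'$ is odd, the blocks split into two flavours $B_{+}$ and $B_{-}$ according to which label is in the majority, yielding four counting identities in the unknowns $|B_{+}|, |B_{-}|$; together with the Steiner total $|B_{+}|+|B_{-}| = v'(v'-1)(v'-2)/(k'(k'-1)(k'-2))$, the system over-determines a rational expression for $\hat{k}$ in terms of $\hat{p}_{+}, \hat{p}_{-}$, analogous to (\ref{eq:B1})--(\ref{eq:B2}) in Lemma~\ref{lem:even}. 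Evaluating this expression at the two admissible cases $\hat{p}_{+} = \hat{p}_{-}$ and $\hat{p}_{+} = \hat{p}_{-}+1$ should produce $\hat{k} = \hat{p}_{+}$ in both instances, the same contradiction as in Lemma~\ref{lem:even}.

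The principal obstacle is the odd-$k'$ subcase: with two block types and four triple types the linear system is over-determined, and collapsing it to the single equation $\hat{k} = \hat{p}_{+}$ requires careful algebraic cancellation of cubic quantities rather than the quadratic cancellations encountered in Lemma~\ref{lem:even}. A secondary obstacle is the $t=3$ analog of Claim~\ref{cl:cases}, whose proof compares a cubic Steiner count against a bound that is only quadratic in $\hat{p}_{-}$, so that the monotonicity in $|\hat{p}_{+} - \hat{p}_{-}|$ needed to restrict to the two cases above must be verified by an explicit polynomial calculation rather than the one-line argument available at $t=2$.
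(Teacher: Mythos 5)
Your overall skeleton matches the paper's: argue by contradiction, use Claim~\ref{cl:recurse} to descend to a balanced $S(3,k-t+3,v-t+3)$, prove the analogue of Claim~\ref{cl:k=p}, and then derive a contradiction from counting identities restricted to the two cases $\hat{p}_+=\hat{p}_-$ and $\hat{p}_+=\hat{p}_-+1$. Where you diverge is the decisive technical step: you stay at the $3$-design level and count the four triple types $+{+}{+},\,+{+}{-},\,+{-}{-},\,-{-}{-}$, whereas the paper takes one further \emph{derived design} step, deleting a single additional point $e_1\in P_-$ to obtain $S(2,k-t+2,v-t+2)=\{B\setminus e_1: e_1\in B\}$. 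That extra deletion is the key idea you are missing. It costs only a controlled shift of the discrepancies (each derived block then has discrepancy in $\{-2,-1,0\}$, with $|B\cap\hat{P}_+|$ equal to $\hat{k}$ or $\hat{k}\pm1$ depending on parity), but it buys two things: all counting is over \emph{pairs}, so the identities are the same quadratic ones already manipulated in Lemma~\ref{lem:even}, and Claim~\ref{cl:cases} applies verbatim to the resulting $2$-design, so no new ``$t=3$ analogue'' needs to be invented.

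As written, your proposal has genuine gaps precisely at the two points you flag as ``obstacles.'' First, the odd-$k'$ subcase: you set up an over-determined system of four cubic identities in $|B_+|,|B_-|$ and assert the collapse to $\hat{k}=\hat{p}_+$ ``should'' occur, but you never verify the cancellation; it is not evident that the cubic system is consistent only when $\hat{k}=\hat{p}_+$, and a proof cannot rest on the analogy with the quadratic case. Second, the $t=3$ analogue of Claim~\ref{cl:cases} ($|\hat{p}_+-\hat{p}_-|\le 1$ for an extremal balanced Steiner $3$-design) is asserted to ``follow by the same template,'' but the comparison there is between a cubic Steiner block count and a bound quadratic in one of the class sizes, and the required monotonicity is left unchecked. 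Your even-$k'$ computation (equating monochromatic triple counts to get $\hat{p}_+=\hat{p}_-$, then dividing the mixed identity by the monochromatic one to force $\hat{p}_+=k'/2=\hat{k}$) does check out, so that branch is salvageable; but to complete the argument you should either carry out the cubic algebra and prove the new version of Claim~\ref{cl:cases}, or adopt the paper's one-more-deletion device and reduce everything back to the already-established $2$-design machinery.
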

  
Based on Lemmas \ref{lem:even} and \ref{lem:odd}, we have the following theorem.
 
 \begin{theorem}\label{th:stbalanced} For $t<k<v$ where $p_{+} > \frac{k+1}{2}$ and $v>2$, 
 $$ A(t,k,v) < S(t,k,v).$$
 \end{theorem}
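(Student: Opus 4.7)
The plan is simply to case-split on the parity of $t$ and invoke the two preceding lemmas. Both Lemma~\ref{lem:even} and Lemma~\ref{lem:odd} are stated under exactly the hypotheses of the theorem, namely $t<k<v$, $p_{+} > (k+1)/2$, and $v>2$, so the work has already been done. Thus if $t$ is even, Lemma~\ref{lem:even} gives $A(t,k,v) < |S(t,k,v)|$, and if $t$ is odd, Lemma~\ref{lem:odd} yields the same conclusion. Combining the two cases covers all admissible $t$ and establishes the theorem.

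The one subtlety worth flagging explicitly is that the hypothesis $p_+ > (k+1)/2$ is a constraint on the labeling realizing the balanced family, while the quantity $A(t,k,v)$ is defined as the maximum over all admissible labelings. One should therefore note that either any labeling meeting the definition of a balanced Steiner system forces $p_+ > (k+1)/2$ (which is essentially immediate since each block has at least $\lceil k/2\rceil$ elements of the majority label and $v>k$ with $v>2$), or one should restate the theorem as applying to a balanced system that achieves this mild condition; the lemmas then apply verbatim.

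I expect no substantive obstacle here, as the theorem is a repackaging of Lemmas~\ref{lem:even} and~\ref{lem:odd}. The only mild concern is bookkeeping about which labeling we are comparing against, and making sure the chain of inequalities $A(t,k,v) < |S(t,k,v)|$ is preserved when taking a maximum over labelings. But since \emph{every} labeling of a hypothetical balanced Steiner system satisfies the hypothesis, the strict inequality transfers to the supremum, which completes the proof.
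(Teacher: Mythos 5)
Your proposal matches the paper exactly: the theorem is stated there as an immediate consequence of Lemma~\ref{lem:even} (for even $t$) and Lemma~\ref{lem:odd} (for odd $t$), with no additional argument. Your remark about the labeling hypothesis $p_{+} > \frac{k+1}{2}$ is a reasonable bookkeeping observation, but it does not change the route, which is the same as the paper's.
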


\section{Optimal and Near-Optimal Code Constructions} \label{sec:construction}

Next we describe how to construct extremal balanced intersecting families of sets (i.e., families of sets that meet the bound in Corollary~\ref{cor:UB}) for several parameter choices. In particular, we exhibit extremal balanced set constructions based on Latin rectangles for all values of $v$ and $t=2,\, k=3$, and asymptotically extremal balanced sets for all even $v$ and $k=t+1$.

\subsection{Extremal balanced systems with parameters $(t=2,k=3,v)$}

We consider a simple construction for the parameters $t=2,k=3$ is in terms of \emph{factorizations of graphs.} 
Recall that a \emph{factor} of a graph is a subgraph with the same set of vertices as the graph. If the spanning subgraph is $r$-regular, it is an {\em $r$-factor}. 
A graph is {\em $r$-factorizable} if its edges  can be partitioned into disjoint $r$-factors. 
Then, a $1$-factor of a graph is a {\em perfect matching}, and a {\em $1$-factorization} is a partition of the graph into matchings. 
Equivalently, a $1$-factorization of a $d$-regular graph is a proper coloring of the edges with $d$ colors. 
Suppose that $K_{p_{+}}$ is a complete graph with vertex set $P_{+}$. Let $\Phi=\{ \Phi_1, \ldots, \Phi_{p_{+}-1} \}$ be a $1$-factorization of $K_{p_{+}}$. Let the triples be of the form $\{{i, a_1, a_2\}},$ where $i \in [p_{-}]$ and where the edge $(a_1, a_2) \in \Phi_i$. It is straightforward to see that the resulting system is a partial $(2, 3, v)$ system consisting of triples defined over $[v]$. When $p_{-} \neq p_{+}$, we have the following result.

\begin{lemma}\label{lem:t2k3o} Suppose that $p_{+}$ is even and that $p_{-} < p_{+}$. Then, $A(2,3,v) = \frac{p_{+} \,p_{-}}{2}$.
\end{lemma}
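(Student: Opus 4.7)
The plan is to sandwich $A(2,3,v)$ between an upper bound obtained from the general counting bound of Lemma~\ref{lem:ub} and a matching lower bound produced by the $1$-factorization construction described in the paragraph immediately preceding the lemma.

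First, I would specialize Lemma~\ref{lem:ub} to $(t,k)=(2,3)$. With $\lfloor t/2 \rfloor = \lceil t/2 \rceil = 1$, $\lceil k/2 \rceil = 2$ and $\lfloor k/2 \rfloor = 1$, the lemma collapses to
$$A(2,3,v) \;\leq\; \frac{\nchoosek{p_+}{1}\nchoosek{p_-}{1}}{\nchoosek{2}{1}\nchoosek{1}{1}} \;=\; \frac{p_+\,p_-}{2}.$$
The hypothesis $p_+ \geq 2\lceil (t+1)/2 \rceil = 4$ required by Lemma~\ref{lem:ub} is automatic in the nontrivial regime, since $p_+$ is even and strictly exceeds $p_- \geq 1$.

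For the matching lower bound, I would verify that the stated construction produces exactly $p_+\,p_-/2$ triples with the required properties. The existence of a $1$-factorization of $K_{p_+}$ into $p_+ - 1$ perfect matchings $\Phi_1,\ldots,\Phi_{p_+ - 1}$ is classical whenever $p_+$ is even, and the assumption $p_- < p_+$ lets me index the first $p_-$ of these matchings by the elements of $P_-$. Each matching contributes $p_+/2$ edges, so the number of triples $\{i, a_1, a_2\}$ with $i \in [p_-]$ and $\{a_1,a_2\} \in \Phi_i$ is exactly $p_- \cdot p_+/2$. Balance is immediate: every block contains two elements of $P_+$ and one of $P_-$, so its discrepancy equals $+1$. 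To confirm the $2$-packing property, I would split pairs by label type: a pair contained in $P_+$ is a unique edge of $K_{p_+}$ and therefore lies in a unique $\Phi_j$, hence in at most one triple; a mixed pair $\{i,a\}$ with $i \in P_-$ and $a \in P_+$ can only arise from the matching $\Phi_i$, in which $a$ has exactly one partner, so it too lies in at most one triple; pairs contained in $P_-$ never appear.

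The argument is essentially bookkeeping. The only mildly delicate step is the case analysis verifying the packing condition, and the only nontrivial external input is the existence of a $1$-factorization of $K_{p_+}$ for even $p_+$, which is standard.
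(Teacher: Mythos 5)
Your proof is correct and follows exactly the route the paper intends: the paper states this lemma without an explicit proof, relying on the $1$-factorization construction in the preceding paragraph for the lower bound and on the specialization of Lemma~\ref{lem:ub} (as in Corollary~\ref{cor:UB}) for the matching upper bound of $p_+p_-/2$. Your verification of the count, the balance, and the packing property via the case split on pair types is the bookkeeping the paper leaves implicit, and it is carried out correctly.
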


\begin{example} Let $P_{+} = 6$ and $P_{-} = 3$, and for simplicity, assume that $P_{+} = \{1,2,3,4,5,6\}$ and $P_{-}=\{a,b,c\}$. Then, $\Phi=\{ \Phi_1, \ldots, \Phi_{5} \}$ is a $1$-factorization of $K_{6}$ where 
\begin{align*}
\Phi_1 &= \Big \{ \{1,2\}, \{3,4\}, \{5,6\} \Big \}, 
\Phi_2 = \Big \{ \{1,4\}, \{2,6\}, \{3,5\} \Big \}, \\ 
\Phi_3 &= \Big \{ \{1,6\}, \{2,3\}, \{4,5\} \Big \},  
\Phi_4 = \Big \{ \{2,4\}, \{1,5\}, \{3,6\} \Big \}, \\ 
\Phi_5 &= \Big \{ \{1,3\}, \{2,5\}, \{4,6\} \Big \}. 
\end{align*}
According to the procedure described prior to Lemma~\ref{lem:t2k3o}, the triples are formed by adding $a$ to each set in $\Phi_1$, adding $b$ to each set in $\Phi_2$, and adding $c$ to each set in $\Phi_3$. This leads to the following triples:
\begin{align*}
&\{ a, 1, 2 \}, \{ a, 3, 4\}, \{ a, 5, 6\}, \\
&\{ b, 1, 4 \}, \{ b, 2, 6\}, \{ b, 3, 5\}, \\
&\{ c, 1, 6 \}, \{ c, 2, 3\}, \{ c, 4, 5\}.
%&\{ b_4, a_2, a_4 \}, \{ b_4, a_1, a_5\}, \{ b_4, a_3, a_6\} \\
%&\{ b_5, a_1, a_3 \}, \{ b_5, a_2, a_5\}, \{ b_5, a_4, a_6\}. 
\end{align*}
Hence, the construction outlined in Lemma~\ref{lem:t2k3o} achieves the bound $A(2,3,v) = 9$ of Lemma~\ref{lem:ub}. 
\end{example}

%\textcolor{red}{Let's start here...}
Next, we turn to the case where $p_{+} = p_{-}$ and $p_{+}$ is even. A simple, yet tedious argument reveals that for these parameter choices, one cannot use the factorization approach outlined in the previous section. Hence, we propose a new construction that relies on a special type of Latin rectangles~\cite{erdos1946asymptotic}; in our setting, a Latin rectangle is defined as an array of dimension $p_{+} \times p_{+}$ with entries belonging to a set of cardinality $2p_{+}$ and such that every element appears \emph{at most} once in each row and column of the array. The rows of the array are indexed by elements from $P_{+} =\{0,1,\ldots,p_{+}-1\},$ while the columns are indexed by elements from the same set, but ``boxed'' $P_{-}=\Big \{ \boxed{0}, \boxed{1}, \ldots, \boxed{p_{+} - 1} \Big \}$, so as to distinguish them from the elements in $P_{+}$. Our choice of notation is governed by the fact that we will use the values in $P_{+}$ and $P_{-}$ - unboxed or boxed - to describe indices and placements of the elements within the array. An additional requirement on the Latin rectangles is that they do not have \emph{fixed points}, i.e., elements in the array that are equal to the index of their respective row or column. To more precisely describe the fixed point constraint, let $\ell_{i,j}$ denote the element of the Latin rectangle with row index $i \in P_{+}$ and column index $j \in P_{-}$. Obviously, triples of the form $\{{i,j,\ell_{i,j}\}}$ constitute a balanced packing as long as a) there are no fixed points in the array, in which case one would have $\ell_{i,j}=i$ or $\ell_{i,j}=j$ and therefore have the same point repeated twice and b) the triples are distinct. Clearly, only half of the entries of the rectangle may be included in the packing.

The concept of Latin rectangles without fixed points is illustrated by the next example for 
which $v=16$ and $p_{+}=p_{-}=8$.

\begin{align*}
\begin{tabular}{ c | c c c c c c c c c c  }
  & \boxed{0}   & \boxed{1} & \boxed{2} & \boxed{3} & \boxed{4}  & \boxed{5} & \boxed{6}   & \boxed{7}  \\ 
 \hline
 0 & 7              & 5              & \boxed{5}  & \boxed{4} & \boxed{3}  & \boxed{2} & 3               & 1 \\  
 1 & 6              & \boxed{4} & \boxed{3}  & \boxed{2} & \boxed{1}  & 4              & 2               & 0    \\
 2 & \boxed{3} & \boxed{2} & \boxed{1}  & \boxed{0} & 5               & 3              & 1               & 7 \\
 3 & \boxed{1} & \boxed{0} & \boxed{7}  & 6              & 4               & 2              & 0               & \boxed{2} \\
 4 & \boxed{7} & \boxed{6} & 7               & 5              & 3               & 1              & \boxed{1}  & \boxed{0} \\
 5 & \boxed{5} & 0              & 6               & 4              & 2               & \boxed{0} & \boxed{7}  & \boxed{6} \\
 6 & 1              & 7              & 5               & 3              & \boxed{7}  & \boxed{6} & \boxed{5}  & \boxed{4} \\
 7 & 0              & 6              & 4               & \boxed{6} & \boxed{5}  & \boxed{4} & \boxed{3}  & 2
\end{tabular}.
\end{align*}

The above Latin rectangle leads to the following $32$ \emph{distinct} triples:
\begin{align*}
& \{0,7,\boxed{0}\}, \{1,6,\boxed{0}\}, \{0,{5},\boxed{1}\}, \{6,{7},\boxed{1}\}, \{4,7,\boxed{2}\}, \{5,6,\boxed{2}\}, \{3,{6},\boxed{3}\}, \{4,{5},\boxed{3}\}  \\
& \{2,5,\boxed{4}\}, \{3,{4},\boxed{4}\}, \{1,4,\boxed{5}\}, \{2,{3},\boxed{5}\}, \{0,{3},\boxed{6}\}, \{1,{2},\boxed{6}\}, \{0,1,\boxed{7}\}, \{2,7,\boxed{7}\},   \\
& \{0,\boxed{2},\boxed{5}\}, \{0,\boxed{3},\boxed{4}\}, \{1,\boxed{1},\boxed{4}\}, \{1,\boxed{2},\boxed{3}\},\{2,\boxed{0},\boxed{3}\}, \{2,\boxed{1},\boxed{2}\}, \{3,\boxed{0},\boxed{1}\}, \{3,\boxed{2},\boxed{7}\}, \\
& \{4,\boxed{0},\boxed{7}\}, \{4,\boxed{1},\boxed{6}\}, \{5,\boxed{0},\boxed{5}\}, \{5,\boxed{6},\boxed{7}\}, \{6,\boxed{4},\boxed{7}\}, \{6,\boxed{5},\boxed{6}\}, \{7,\boxed{3},\boxed{6}\}, \{7,\boxed{4},\boxed{5}\}.
\end{align*}
It is straightforward to check that this set of triples constitutes a balanced $(2,3,16)$ packing of maximum size. 
%To ensure that we satisfy the properties of  packings we need to verify that the array is indeed a Latin rectangle, and that it satisfies the special structural property regarding fixed points. 
We describe next a general construction for Latin rectangles with the properties listed above. 

%For example if $\ell$ is as was defined in the previous example, then $\ell_{1,9} = 8$. Note that due to this mapping between arrays and bicolored systems, if $\ell_{i,j}=k$, then then triple $\{i,j,k\}$ is a triple contained in the $(2,3,v)$ bicolored system that corresponds to the array $\ell$. 
%Motivated by this connection, our goal in the following will be to construct arrays similar to the one in the previous example. Our constructions will then be used to prove the existence of $(2,3,v)$ bicolored systems. We will mainly focus on the case where $v$ is an even number and consider the remaining possible values for $v$ afterwards.  
%In order to construct $(2,3,v)$ bicolored systems, we will require the notion of a Latin rectangle which is defined next.
%\begin{definition}\label{def:biarr} Let $\ell$ be a $P_{+} \times P_{-}$ array (where $P_{+}=P_{-}$) such that the rows of $\ell$ are indexed by $P_{+} = [P_{+}]$ and the columns are indexed by $P_{-} = [v] \setminus P_{+}$. We say that $\A$ is a \textit{bicol} if the following holds for any $i,j$:
%\begin{enumerate}
%\item $\ell_{i,j} \not \in \{i,j\}$.
%\item For any $i' \neq i$, $\ell_{i,j} \neq \ell_{i',j}$.
%\item For any $j' \neq j$, $\ell_{i,j} \neq \ell_{i,j'}$.
%\end{enumerate}
%\end{definition}

A naive approach is to create the first column of the rectangle using only elements from either $P_{+}$ or $P_{-}$, and then fill in the remaining columns using cyclic shifts of the elements in the first column. Unfortunately, such a choice necessarily leads to one column being improperly filled out. This is why we require that the first column contain elements from both $P_{+}$ and $P_{-}$. The remaining columns are obtained from the first column by applying a simple cyclic-shift operation as detailed next. To explain how the operation is performed, we note that one can decompose a rectangle into two sub-arrays: one sub-array contains entries from $P_{+}$ only, while the other contains elements from $P_{-}$ only, as shown below. Note that in the second sub-array we switched the row and column indices for ease of interpretation.
\vspace{0.1in}

\begin{tabular}{ c | c c c c c c c c c c  }
  & \boxed{0}   & \boxed{1} & \boxed{2} & \boxed{3} & \boxed{4}  & \boxed{5} & \boxed{6}   & \boxed{7}  \\ 
 \hline
 0 & 7              & 5              & &   &   &   & 3               & 1 \\  
 1 & 6              &  &   &  &    & 4              & 2               & 0    \\
 2 & & &   &   & 5               & 3              & 1               & 7 \\
 3 &  &   &    & 6              & 4               & 2              & 0               & \\
 4 &  & & 7               & 5              & 3               & 1              &    & \\
 5 & & 0              & 6               & 4              & 2               &   &    &  \\
 6 & 1              & 7              & 5               & 3              &  &   &   & \\
 7 & 0              & 6              & 4               &    &   &  &   & 2
\end{tabular}
\quad
\begin{tabular}{ c | c c c c c c c c c c  }
  & 0   & 1 & 2 & 3 & 4  & 5 & 6   & 7  \\ 
 \hline
 \boxed{0} &               &               & \boxed{3}  & \boxed{1} & \boxed{7}  & \boxed{5} &                &  \\  
 \boxed{1} &               & \boxed{4} & \boxed{2}  & \boxed{0} & \boxed{6}  &               &                &     \\
 \boxed{2} & \boxed{5} & \boxed{3} & \boxed{1}  & \boxed{7} &                &               &                &  \\
 \boxed{3} & \boxed{4} & \boxed{2} & \boxed{0}  &               &                &               &                & \boxed{6} \\
 \boxed{4} & \boxed{3} & \boxed{1} &                &               &                &               & \boxed{7}  & \boxed{5} \\
 \boxed{5} & \boxed{2} &               &                &               &                & \boxed{0} & \boxed{6}  & \boxed{4} \\
 \boxed{6} &               &               &                &               & \boxed{1}  & \boxed{7} & \boxed{5}  & \boxed{3} \\
 \boxed{7} &               &               &                & \boxed{2} & \boxed{0}  & \boxed{6} & \boxed{4}  & 
\end{tabular}\\
\vspace{0.1in}

Formally, let $\cR$ be a Latin rectangle without fixed points, and let $\cR_{P_{+}}$ be the subarray containing elements from $P_{+}$ and let $\cR_{P_{-}}$ be the transpose of the subarray containing elements from $P_{-}$. 
%Note that $\cR_{P_{+}}$ corresponds to the rectangle shown on the left and $\cR_{P_{-}}$ corresponds to the rectangle shown on the right.
For simplicity, we refer to $\cR_{P_{+}}$ as a \textit{positive Latin rectangle} and $\cR_{P_{-}}$ as a \textit{negative Latin rectangle}. Clearly, $\cR$ is a Latin rectangle without fixed points if and only if $\cR_{P_{+}}$ is a positive Latin rectangle without fixed points and $\cR_{P_{-}}$ is a negative Latin rectangle without fixed points. 
%We now discuss how to construct $\cR_{P_{+}}$ and $\cR_{P_{-}}$ and afterwards we prove the correctness of our procedure.

With a slight abuse of notation, and for simplicity of notation, we hence omit boxes around indices and variables. Note that if $\ell_{i,j}=k$ is an entry in $\cR_{P_{+}}$, then necessarily $i \in P_{+},$ implying that $i$ is unboxed, $j \in P_{-},$ implying that $j$ is boxed. Similarly, as $k \in P_{+},$ the entry is unboxed. Similarly, if $\ell_{i,j}=k$ is an entry in $\cR_{P_{-}}$, then $i \in P_{-}$ is boxed, $j \in P_{+}$ is unboxed, and $k \in P_{-}$ is boxed. Therefore, in order to know which entries are boxed and unboxed, if suffices to know whether $\ell_{i,j}$ is an entry in a positive or negative Latin rectangle. Furthermore, all values are taken modulo $p_{+}$, regardless of the value being boxed or not.

Our procedure for constructing Latin rectangles without fixed points starts by carefully filling out the entries in the first column, i.e., the column indexed by the element $\boxed{0}$ or the element $0$. In particular, for $\ell_{i,j} \in \cR_{P_{+}}$ and $\ell_{i,j} \in \cR_{P_{-}}$, we have:
\begin{align*}
\textbf{\textit{Update rule}:}\ \text{If } \ell_{i,j} = k, \text{ then } \ell_{i-1,j+1}=k-1, 
\end{align*}
where $i,j \in \{{0,\ldots,p_{+}-1\}}$. 

%\textcolor{blue}{As an example of our update rule for the positive Latin rectangle is that if $\ell_{i,\boxed{j}}=k$, then $\ell_{i-1, \boxed{j+1}} = k-1$. For our negative Latin rectangles, the update rule is if $\ell_{\boxed{i},j} = \boxed{k}$, then $\ell_{\boxed{i-1}, j+1}= \boxed{k-1}$. Here, and hencefoth in this section, all values are taken modulo $p_{+}$, regardless of the value being boxed or not. } 

In our running example, consider the entry $\ell_{1,0} = 6 \in \cR_{P_{+}}$. According to the above update rule, $\ell_{0,1} = 5$, and $\ell_{7,2}=4$. Similarly, $\ell_{2, 0} = 5 \in \cR_{P_{-}}$, which implies $\ell_{1, 1} = 4$. 
%Unless stated otherwise, we assume in this section that addition and subtraction are performed modulo $p_{+}$ as illustrated by the previous discussion and that $\boxed{i} - \boxed{j} = (i - j) \bmod p_{+}$. 
It is straightforward to see that both $\cR_{P_{+}}$ and $\cR_{P_{-}}$ are completely determined by the choice of the first column. We hence focus on describing how to properly choose this column.
%
%As an example, assume that one is given the following assignment of values to the first column:
%\begin{align*} 
%\begin{tabular}{ c | c c c c c c c c c c  }
%  & 9 & 10 & 11 & 12 & 13 & 14 & 15 & 16  \\ 
% \hline
% 0 & 8   &    &   &    &   &  &   &  \\  
% 1 & 7   &    &  &     &  &   &   &     \\
% 2 & 5   &  &    &    &    &   &   &  \\
% 3 & 11 &    &    &     &    &   &   &  \\
% 4 & 3  &   &     &     &    &   &   &  \\
% 5 & 15 &   &     &     &    &    &   &  \\
% 6 & 2   &   &     &     &    &  &   &  \\
% 7 & 1  &    &     &     &  &    &  & 
%\end{tabular}
%\end{align*}
%
%Using the values in the first column, we can fill out the elements on three diagonals according to our update rule as follows
%\begin{align*}
%\begin{tabular}{ c | c c c c c c c c c c  }
%  & 9 & 10 & 11 & 12 & 13 & 14 & 15 & 16  \\ 
% \hline
% 1 & 8   &    &   &    &   & 12  &  4 & 2 \\  
% 2 & 7   &    &  &     & 11  &  5 & 3  &     \\
% 3 & 5   &  &    &  10  &  6  & 4  &   &  \\
% 4 & 11 &    & 9   &  7   & 5   &   &   &  \\
% 5 & 3  & 16  &  8   &  6   &    &   &   &  \\
% 6 & 15 &  1 &  7   &     &    &    &   &  \\
% 7 & 2   & 8  &     &     &    &  &   & 14 \\
% 8 & 1  &    &     &     &  &    & 13  & 3
%\end{tabular}
%\end{align*}

The first relevant observation is that the cyclic-shift diagonal construction for Latin rectangles satisfies a symmetry condition, which asserts that the elements in the array can be grouped into pairs that lead to the same triple. As a result, the condition also describes how to choose elements in the array that lead to distinct triples in the packing, with exactly half of the elements in the array included in the packing. 

\begin{definition}\label{cl:symm} \emph{(Symmetry condition)} If $\ell_{i,j}=h \in \cR_{P_{+}},$ then
$$\ell_{h,j} = i \in \cR_{P_{+}}.$$
Similarly, if $\ell_{i,j}=h \in \cR_{P_{-}},$ then $\ell_{h,j} = i \in \cR_{P_{-}}$.
\end{definition}

A Latin rectangle without fixed points necessarily satisfies the above symmetry condition: it is straightforward to see that if an element appears at most once in each row and column of such a rectangle, any triple in the corresponding $(2,3,2p_{+})$ packing may be generated from exactly two entries in the underlying Latin rectangle. In our running example, the $(2,3,16)$ packing contains the triple $(0,7,0)$ which results from two entries in the first column, $\ell_{0,0}=7$ and $\ell_{7,0}=0$. Similarly, the triple $(0,2,5)$ arises from two entries in the first row, $\ell_{0,2}=5$ and $\ell_{0,5}= 2$. It is straightforward to check that any triple containing two elements from $P_{+}$ may be obtained from two entries in the same column, and that any triple containing two elements from $P_{-}$ may be obtained from two entries in the same column of the corresponding positive and negative rectangle, respectively. 
 
Due to the symmetry condition, we may pair up elements in the first column of $\cR_{P_{+}}$ and $\cR_{P_{-}}$ into $\frac{p_{+}}{2}$ pairs that lead to the same triples. For the running example, the first column of $\cR_{P_{+}}$ comprises two pairs $\{ \{0,7\}, \{1,6\} \}$. We refer to this set  as the \textit{positive Latin rectangle set}. Similarly, the two pairs $\{ \{2,5\}, \{3,4\} \}$ are referred to as the \textit{negative Latin rectangle set}.

Next, we introduce a \emph{regularity condition} for Latin rectangles. Since the construction of the Latin rectangles is based on cyclically shifting increasing or decreasing sequences on the diagonals, seeded by the elements in the first column, one can automatically ensure that no column contains the same element more than once. The regularity condition ensures that no \emph{row} contains the same element more than once. 
%Recall that the operations involving elements from $P_{-}$ and $P_{+}$ are performed modulo $p_{+}$.

\begin{definition}\label{cl:reg} (Regularity condition) For any two distinct pairs $\{i,j\}$ and $\{i', j'\}$ in either the positive or negative Latin rectangle set, it holds that
%$\ell_{i,\,P_{+}+1} = \ell_{i,\,P_{+}+1+s} \leq P_{+}$, then
$$ i-i' \neq j-j'.$$
\end{definition}

To intuitively justify the regularity condition, consider an example for which the regularity condition does not hold. Suppose the pairs $\{6,2\}$ and $\{4,0\}$ are in the positive Latin rectangle set. Note that since $6-4 = 2-0$, the regularity condition does not hold. For this setting, the corresponding Latin rectangle $\cR_{P_{+}}$ has $\ell_{4,0} = 0$ and $\ell_{6,0}=2$. Since $\ell_{6,0}=2$, applying our update rule twice would imply that $\ell_{4,2} = 0=\ell_{4,0}$. The last equality asserts that two entries in the row indexed by $4$ have the same value.

The following claim can be easily established by noting that the set of differences of elements in the same column of our Latin rectangles is the same for all columns.

\begin{claim}\label{cl:reg} Suppose that $\ell_{i,j} \in \cR_{P_{+}}$ or  that $\ell_{i,j} \in \cR_{P_{-}}$, and that $j \neq j'$. If $\ell_{i,j} = \ell_{i,j'}$, then there exist two distinct pairs $\{i_1, j_1\}, \{i_2, j_2\}$ in either the positive or a negative Latin rectangle set such that
$$i_2-i_1= j_2 - j_1.$$
\end{claim}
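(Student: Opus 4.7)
The plan is to exploit the diagonal structure of the construction: every entry of $\cR_{P_{+}}$ and $\cR_{P_{-}}$ lies on an anti-diagonal that originates in the first column (for $\cR_{P_{+}}$) or the first row (for $\cR_{P_{-}}$), and the update rule lets us pull any entry back along its anti-diagonal to a seed in column $\boxed{0}$ (resp.\ row $0$). Tracing the hypothesized row collision back to two such seeds will deliver the required difference condition directly. I argue the case $\ell_{i,j}, \ell_{i,j'} \in \cR_{P_{+}}$; the $\cR_{P_{-}}$ case is identical after transposing.

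Concretely, the update rule read in reverse, $\ell_{i,j} = k \Rightarrow \ell_{i+1, j-1} = k+1$, iterated $j$ times shows that $\ell_{i,j} = k$ is seeded by $\ell_{i+j, 0} = k+j$ (all indices mod $p_{+}$); similarly $\ell_{i,j'} = k$ is seeded by $\ell_{i+j', 0} = k+j'$. By the symmetry condition, these two seed entries correspond to two pairs $\{i_1, j_1\} = \{i+j, k+j\}$ and $\{i_2, j_2\} = \{i+j', k+j'\}$ in the positive Latin rectangle set, and a one-line computation gives $i_2 - i_1 = j' - j = j_2 - j_1$, which is exactly the desired equality. Since $j \neq j' \pmod{p_{+}}$, the common difference is nonzero, so the two ordered seed entries in column $\boxed{0}$ are plainly different.

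The only technical wrinkle, and what I expect to be the subtlest part of the write-up, is to confirm that the two (ordered) seeds actually correspond to distinct \emph{unordered} pairs. The only way $\{i_1, j_1\} = \{i_2, j_2\}$ could collapse is $i_1 = j_2$ and $j_1 = i_2$, which after substitution forces $2(i-k) \equiv 0 \pmod{p_{+}}$. The no-fixed-point hypothesis excludes $i = k$, so the collapse can only occur when $p_{+}$ is even and the seed pair has the special form $\{a, a + p_{+}/2\}$. A direct check shows that such a pair, acting alone, already produces in each row $i$ two entries at columns $a - i$ and $a + p_{+}/2 - i$ both carrying value $i + p_{+}/2$, contradicting the Latin property of $\cR_{P_{+}}$ that is being assumed (equivalently, such seeds are disallowed in the construction). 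Hence the two pairs are genuinely distinct in the rectangle set, which completes the argument.
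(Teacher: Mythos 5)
Your diagonal-tracing argument is essentially the paper's own: the paper dispatches this claim in a single sentence by observing that differences of entries within a column are invariant from column to column along the diagonals, which is exactly your reverse application of the update rule pulling both colliding entries back to seeds $\ell_{i+j,0}=k+j$ and $\ell_{i+j',0}=k+j'$ in column $\boxed{0}$, and your identification of the collapse case $\{a,a+p_{+}/2\}$ is a genuine subtlety the paper glosses over. One caveat on how you close that edge case: you cannot appeal to ``the Latin property that is being assumed,'' since the claim's hypothesis is precisely that the Latin property fails ($\ell_{i,j}=\ell_{i,j'}$); the correct resolution is the one in your parenthetical, namely that Constructions~\ref{con:con4} and~\ref{con:conn4} never seed a pair whose two elements differ by $p_{+}/2$ (a quick check of the pair differences $2i+1$, $2i+2$, and $\frac{3p_{+}-2}{4}$ confirms this), so the degenerate case never arises where the claim is actually invoked.
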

%\begin{proof}
%The result follows since if $\ell_{i,j} = k$, then $\ell_{(i-1)_{P_{+}}, (j+1)_{P_{+}}} = (k-1)_{P_{+}}$. This consequently implies that if $\ell_{i,P_{+}+1}=m$, then $\ell_{i} = (m )_{P_{+}}$. Similarly, if  $\ell_{i,P_{+} +s+1}=m$, then $\ell_{i+s}=(m+s)_{P_{+}}$. Thus, $\big ( \ell_{i+s} - \ell_{i} \big )_{P_{+}} = s$.
% \end{proof}
%We now describe how to construct Latin rectangle sequences which satisfy the symmetry conditions from Claim~\ref{cl:symm} along with the regularity conditions from Claim~\ref{cl:reg}.
The construction procedure for Latin rectangle sets varies slightly depending on whether $p_{+}$ is divisible by four. 
We start with the simpler case when $4 |{p_{+}}$. 

\begin{construction}\label{con:con4} Suppose that $4 | p_{+}$. Set the elements of the positive Latin rectangle set to 
$$ \{i, - i - 1 \},$$
where $i \in [ \frac{p_{+}}{4} -1 ]$.  Set the elements of the negative Latin rectangle set to
$$ \{i, - i - 1 \},$$
where $i \in \{ \frac{p_{+}}{4}, \frac{p_{+}}{4} + 1, \ldots, \frac{p_{+}}{2}-1 \}$.
%Then, the corresponding Latin rectangle gives rise to a symmetric Latin rectangle satisfying the regularity condition.
\end{construction}

\begin{lemma}\label{lem:4p} For $4 | p_{+}$, Construction~\ref{con:con4} results in a Latin rectangle without fixed points.   \end{lemma}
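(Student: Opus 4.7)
The plan is to verify three properties of the array produced by Construction~\ref{con:con4}: that the first column yields a well-defined collection of disjoint pairs partitioning $\{0,1,\ldots,p_+-1\}$; that no fixed point arises under the update rule; and that the regularity condition holds within each of the positive and negative Latin rectangle sets. Once regularity is established, Claim~\ref{cl:reg} rules out row repetitions, and column repetitions are automatically avoided, because inside either $\cR_{P_+}$ or $\cR_{P_-}$ the entries of column $j$ are the first-column entries translated by $-j$ and thus inherit the distinctness of column~$0$.

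For well-definedness, I would quickly check that the two index ranges produce positive-set rows $\{0,\ldots,p_+/4-1\} \cup \{3p_+/4,\ldots,p_+-1\}$ and negative-set rows $\{p_+/4,\ldots,3p_+/4-1\}$, which together partition $\{0,\ldots,p_+-1\}$. For the absence of fixed points, I would use the update-rule identity $\ell_{i,j} = \ell_{i+j,0} - j$ to reduce $\ell_{i,j} = i$ to the existence of a self-pair $\ell_{s,0} = s$ in the first column; in Construction~\ref{con:con4} a self-pair would force $2i \equiv -1 \pmod{p_+}$, which has no solution since $p_+$ is even and $-1$ is odd.

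The main step is verifying the regularity condition. Reading it over both orderings of each unordered pair, the condition for distinct pairs $\{a,b\},\{c,d\}$ in the same set reduces to the two modular non-equalities $a-b \not\equiv \pm(c-d) \pmod{p_+}$. For pairs $\{i,-i-1\}$ with indices $i_1 \neq i_2$, these become $2(i_1-i_2) \not\equiv 0 \pmod{p_+}$ and $2(i_1+i_2+1) \not\equiv 0 \pmod{p_+}$, equivalently $i_1 \not\equiv i_2 \pmod{p_+/2}$ and $i_1+i_2+1 \not\equiv 0 \pmod{p_+/2}$. For the positive set with $i_1, i_2 \in \{0,\ldots,p_+/4-1\}$ distinct, both follow from the strict bounds $|i_1 - i_2| < p_+/4 < p_+/2$ and $0 < i_1+i_2+1 \leq p_+/2-1 < p_+/2$; an identical argument with the index variables shifted by $p_+/4$ handles the negative set. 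The chief subtlety is the double interpretation of the regularity condition (because pairs are unordered, both same-sign and opposite-sign modular identities must be ruled out), and the hypothesis $4 \mid p_+$ is precisely what keeps the two index ranges integer-valued, non-overlapping, and small enough relative to $p_+/2$ for both non-equalities to hold strictly.
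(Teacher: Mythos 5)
Your proof is correct and follows essentially the same route as the paper: reduce column-distinctness and the absence of fixed points to properties of the first column, then verify the regularity condition by modular arithmetic on the pairs $\{i,-i-1\}$, invoking Claim~\ref{cl:reg} to exclude row repetitions. Your treatment of the regularity condition is in fact slightly more careful than the paper's, since you explicitly rule out both orderings of the unordered pairs (the conditions $2(i_1-i_2)\not\equiv 0$ and $2(i_1+i_2+1)\not\equiv 0 \pmod{p_+}$), whereas the paper writes out only the first and leaves the second to the observation that no two row indices within one Latin rectangle set differ by $p_+/2$.
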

\begin{proof} 
Since the elements from $P_{+}$ within column $i$ of the underlying Latin rectangles are obtained by decreasing the values of the elements from $P_{+}$ in column $i-1$, it follows that a column contains an element from $P_{+}$ twice if and only if the first column contains an element from $P_{+}$ twice. Clearly, Construction~\ref{con:con4} ensures that the first column only contains distinct elements from $P_{+}$. A similar argument for elements from $P_{-}$. Hence, no element appears more than once in each column of the underlying Latin rectangle. 

Next, we show that no row contains an element in the rectangle more than once. According to Claim~\ref{cl:reg}, this is equivalent to showing that the regularity condition holds. Suppose, on the contrary, that there exist two distinct pairs $\{i, j\}, \{i', j' \}$ such that $i - i' = j - j'$. According to Construction~\ref{con:con4}, $j=-i-1$ and $j'=-i'-1$, so that
$$ i - i' = -(i-i'),$$
which can only hold if $i - i' = \frac{p_{+}}{2}$. However, according to Construction~\ref{con:con4}, if $i - i' = \frac{p_{+}}{2}$, then $|\{ i, i' \} \cap P_{+}| = 1$, and either $\{i,j\}$ belongs to the positive Latin rectangle set and $\{i',j'\}$ belongs to the negative Latin rectangle set or vice versa. This follows since if $\{i,j\}, \{i', j'\}$ are both in the positive Latin rectangle, then $i-i' \in \{1,2,\ldots, \frac{p_{+}}{2}-2 \}$ or $i-i' \in \{\frac{p_{+}}{2}+2,\ldots, p_{+}-2 \}$, and so $i - i' \neq \frac{p_{+}}{2}$. In either case, the condition in Claim~\ref{cl:reg} cannot be met. This implies that the regularity condition holds.

To see that the Latin rectangles do not have fixed points, suppose on the contrary that $\ell_{i,j} = i$. Then, according to our update rule, $\ell_{i,{0}} = i$, which is impossible according to Construction~\ref{con:con4}. By symmetry, the same argument can be used to handle the case where $\ell_{i,j} = j$.
\end{proof}
%\textcolor{red}{Note that 5) and 6) are necessary from 3) and 4) along with the symmetry conditions from Claim~\ref{cl:symm}. ?}

We consider next the case when $\frac{p_{+}}{2}$. These parameter values require a slightly different construction for the Latin rectangle sets. Once again, we illustrate the construction by an example in which $p_{+} = p_{-} = 10$, shown below.
\vspace{0.1in}

\begin{align*}
\begin{tabular}{ c | c c c c c c c c c c  }
  & \boxed{0}   & \boxed{1} & \boxed{2} & \boxed{3} & \boxed{4}  & \boxed{5} & \boxed{6}   & \boxed{7} & \boxed{8} & \boxed{9}  \\ 
 \hline
 0 & 9              & 7              & 4               & {2}            & \boxed{7}  & {8}            & 6               & \boxed{4} & 3               & 1 \\  
 1 & 8              & 5              & {3}             & \boxed{6} & 9               & 7              & \boxed{3}  & 4              & 2               & 0    \\
 2 & 6              & 4              & \boxed{5}  & 0              & 8               & \boxed{2} & 5               & 3              & 1               & 9  \\
 3 & 5              & \boxed{4} & 1               & 9              & \boxed{1}  & 6              & 4               & {2}            & 0               & 7  \\
 4 & \boxed{3} & 2              & 0               & \boxed{0} & 7               & 5              & {3}             & {1}           & 8               & 6 \\
 5 & 3              & 1              & \boxed{9}  & 8              & 6               & {4}            & {2}             & {9}           & 7               & \boxed{2} \\
 6 & 2              & \boxed{8} & 9               & 7              & {5}             & {3}            & {0}            & {8}           & \boxed{1}  & 4 \\
 7 & \boxed{7} & 0              & 8               & {6}            & {4}            & {1}            &  {9}            &\boxed{0} & 5               & 3 \\
 8 & 1              & 9              & 7               & {5}            & {2}            & {0}            & \boxed{9}  & {6}           & 4               & \boxed{6} \\
 9 & 0              & 8              & 6               & {3}            & {1}            & \boxed{8} & {7}             & 5             & \boxed{5}  & 2
\end{tabular}.
\end{align*}
\vspace{0.1in}
As before, one can decompose the Latin rectangle into two sub-arrays as shown below.
\vspace{0.1in}

\hspace{-8.0ex}\begin{tabular}{ c | c c c c c c c c c c  }
  & \boxed{0}   & \boxed{1} & \boxed{2} & \boxed{3} & \boxed{4}  & \boxed{5} & \boxed{6}   & \boxed{7} & \boxed{8} & \boxed{9}  \\ 
 \hline
 0 & 9              & 7              & 4               & {2}            &                  & {8}            & 6               &                 & 3               & 1 \\  
 1 & 8              & 5              & {3}             &                 & 9               & 7              &                  & 4              & 2               & 0    \\
 2 & 6              & 4              &                  & 0              & 8               &                 & 5               & 3              & 1               & 9  \\
 3 & 5              &                 & 1               & 9              &                  & 6              & 4               & {2}            & 0               & 7  \\
 4 &                 & 2              & 0               &                 & 7               & 5              & {3}             & {1}           & 8               & 6 \\
 5 & 3              & 1              &                  & 8              & 6               & {4}            & {2}             & {9}           & 7               &  \\
 6 & 2              &                 & 9               & 7              & {5}             & {3}            & {0}            & {8}           &                  & 4 \\
 7 &                 & 0              & 8               & {6}            & {4}            & {1}            &  {9}            &                & 5               & 3 \\
 8 & 1              & 9              & 7               & {5}            & {2}            & {0}            &                  & {6}           & 4               &    \\
 9 & 0              & 8              & 6               & {3}            & {1}            &                 & {7}             & 5             &                  & 2
\end{tabular}
\quad
\begin{tabular}{ c | c c c c c c c c c c  }
  & \boxed{0}   & \boxed{1} & \boxed{2} & \boxed{3} & \boxed{4}  & \boxed{5} & \boxed{6}   & \boxed{7} & \boxed{8} & \boxed{9}  \\ 
 \hline
 0 &                 &                 &                  &                 & \boxed{7}  &                 &                  & \boxed{4} &                  &   \\  
 1 &                 &                 &                  & \boxed{6} &                  &                 & \boxed{3}  &                 &                  &      \\
 2 &                 &                 & \boxed{5}  &                 &                  & \boxed{2} &                  &                 &                  &    \\
 3 &                 & \boxed{4} &                  &                 & \boxed{1}  &                 &                  &                 &                  &    \\
 4 & \boxed{3} &                 &                  & \boxed{0} &                  &                 &                  &                 &                  &   \\
 5 &                 &                 & \boxed{9}  &                 &                  &                 &                  &                 &                  & \boxed{2} \\
 6 &                 & \boxed{8} &                  &                 &                  &                 &                  &                 & \boxed{1}  &   \\
 7 & \boxed{7} &                 &                  &                 &                  &                 &                  &\boxed{0}  &                  &   \\
 8 &                 &                 &                  &                 &                  &                 & \boxed{9}  &                 &                  & \boxed{6} \\
 9 &                 &                 &                  &                 &                  & \boxed{8} &                  &                & \boxed{5}  &  
\end{tabular}.

\vspace{0.1in}
Recall that when $\frac{p_{+}}{2}$ is even, one partitions the first column into two sets, each of size $\frac{p_{+}}{2}$, with half of the elements lying in $P_{+}$ and half of the elements lying in $P_{-}$. For the case when $\frac{p_{+}}{2}$ is odd, adopting the same approach as described in Construction~\ref{con:con4} does not produce the desired result. We hence modify the construction of positive and negative Latin rectangle sets as follows.

\begin{construction}\label{con:conn4} Suppose that $\frac{p_{+}}{2}$ is odd. Then, the positive Latin rectangle set consists of the following pairs:
$$ \{ i, -i -1 \},$$
for $i \in [ \frac{ p_{+}-2}{4} - 1]$, and 
$$ \{ i, -i - 2 \},$$
for $i \in \{ \frac{ p_{+}-2}{4}, \frac{ p_{+}-2}{4} + 1, \ldots, \frac{ p_{+}}{2} - 2\}$. 
The negative Latin rectangle set contains only one pair:
$$ \{ \frac{p_{+}}{2} - 1, - \frac{p_{+}-2}{4}-1 \}. $$
\end{construction}

\begin{lemma}\label{lem:4np} When $\frac{p_{+}}{2}$ is odd, Construction~\ref{con:conn4} produces a Latin rectangle without fixed points.   \end{lemma}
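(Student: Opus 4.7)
The plan is to mirror the three-part structure of the proof of Lemma~\ref{lem:4p}, establishing in turn (i) column uniqueness, (ii) the regularity condition from Claim~\ref{cl:reg} (which yields row uniqueness), and (iii) the absence of fixed points, with the modifications needed to accommodate the two-type positive-set structure and the single negative pair introduced by Construction~\ref{con:conn4}. Writing $p_+ = 4m+2$ so that $p_+/2$ is odd, I would first verify that the prescribed pairs---Type A pairs $\{i,-i-1\}$ with $i$ small, Type B pairs $\{i,-i-2\}$ with $i$ medium, together with the lone negative pair $\{\tfrac{p_+}{2}-1,-\tfrac{p_+-2}{4}-1\}$---actually partition $\{0,1,\ldots,p_+-1\}$ into $p_+/2$ two-element subsets. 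Tracking the index ranges $\{0,\ldots,m-1\}$ and $\{m,\ldots,2m-1\}$ along with their reflected partners shows that Type A and Type B cover all of $\{0,\ldots,p_+-1\}$ except precisely $\{2m,3m+1\}$, which is exactly the negative pair. Column uniqueness then follows immediately from the diagonal-cycling update rule, since distinctness in the first column propagates to every column.

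The central step is (ii). By Claim~\ref{cl:reg} it suffices to show $i - i' \not\equiv j - j' \pmod{p_+}$ for every two distinct pairs $\{i,j\}$ and $\{i',j'\}$ lying in the same (positive or negative) set. The negative set contains a single pair, so regularity is vacuous there. Within the positive set, I would split into cases by pair type. Same-type comparisons, whether Type A vs. Type A or Type B vs. Type B, reduce to $2(i-i') \not\equiv 0 \pmod{p_+}$, which holds because within each type the indices span a range strictly shorter than $p_+/2$. The cross-type comparison, Type A vs. Type B, reduces to $2(i-i') \not\equiv 1 \pmod{p_+}$, which is automatic since $p_+$ is even and the right-hand side is odd. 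This cross-type check is the main subtlety: it is precisely the asymmetric shift from $-i-1$ to $-i-2$ in Construction~\ref{con:conn4} (compared with the uniform shift of Construction~\ref{con:con4}) that allows a valid pair partition to exist when $p_+/2$ is odd, and one must confirm it does not interfere with regularity.

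For (iii), I would reuse the propagation argument from the proof of Lemma~\ref{lem:4p}: a fixed point $\ell_{i,j}=i$ at any position forces, via the update rule applied backward $j$ times, a corresponding fixed point in the first column, so it suffices to rule out degenerate pairs of the form $\{i,i\}$ in the first column of either $\cR_{P_+}$ or $\cR_{P_-}$. For Type A this amounts to $2i+1 \not\equiv 0 \pmod{p_+}$, true because $p_+$ is even; for Type B, it reduces to $i+1 \not\equiv 0 \pmod{p_+/2}$, true because $i+1$ lies strictly between $0$ and $p_+/2$; and the single negative pair is handled by a direct modular check showing that its two entries differ modulo $p_+$. The principal obstacle throughout the argument is the careful bookkeeping of the mixed-type cross-check in the positive set and its compatibility with the lone compensating negative pair, but once the partition structure is laid out, each remaining verification reduces to an elementary congruence modulo $p_+$ exploiting that $p_+/2$ is odd.
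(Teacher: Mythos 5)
Your proposal is correct and follows essentially the same route as the paper's proof: column-uniqueness and the absence of fixed points are delegated to the argument of Lemma~\ref{lem:4p}, and the heart of the matter is the regularity condition, which you verify by the same case analysis on pair types (your same-type and cross-type congruences $2(i-i')\not\equiv 0$ and $2(i-i')\not\equiv 1 \pmod{p_{+}}$ are exactly the paper's cases 1/4 and 2/3, and the lone negative pair is dispatched the same way). The additional bookkeeping you include---verifying that the Type A, Type B, and negative pairs partition $\{0,1,\ldots,p_{+}-1\}$---is a sensible sanity check that the paper leaves implicit, but it does not alter the argument.
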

\begin{proof} The fact that no elements appear more than once in every column and that the Latin rectangle does not have fixed points follows along the same lines as described in the proof of Lemma~\ref{lem:4p}. Hence, we focus on proving that the regularity condition is satisfied.

Suppose that two distinct pairs $\{i,j\}, \{i',j'\}$ belong to the positive Latin rectangle set and that, on the contrary, the regularity condition does not hold. Then, 
$$i' - i = j' - j.$$ 
According to Construction~\ref{con:conn4}, we need to consider four different cases:
\begin{enumerate}
\item $j = -i-1$, $j' = -i' - 1$,
\item $j = -i - 1$, $j' = -i' - 2$,
\item $j = -i - 2$, $j' = -i' - 1$,
\item $j = -i - 2$, $j' = -i' - 2$.
\end{enumerate}
Suppose first that 1.) holds. Then, $i' - i = j' - j = (-i' - 1) - (-i - 1) = i - i'$, which is true only if $i-i'=\frac{p_{+}}{2}$. However, if 1.) is true, then $i-i' \neq \frac{p_{+}}{2},$ since $i - i' \in \{ 1,2,\ldots, \frac{p_{+}}{2} -3 \}$ or  $i - i' \in \{ \frac{p_{+}}{2} + 3, \frac{p_{+}}{2} + 3, \ldots, p_{+}-2 \}$. The observation is a consequence of Construction~\ref{con:conn4}, and leads to a contradiction. Next, we consider case 2.) and suppose that $i' - i = (-i' - 2) - (-i-1) = i-i' - 1$. This clearly cannot be true as all operations are performed modulo $p_{+},$ where $p_{+}$ is even. Similarly, if 3.) holds, then $i' - i = (-i' - 1) - (-i-2) = i-i' + 1,$ which is impossible. Finally, if 4.) holds, then $i' - i = (i' - 2) - (-i - 2) = i - i'$, which is true only if $i-i'=\frac{p_{+}}{2}$. Similarly to Case 1.), if 4.) holds for $i,i'$, then $i - i'  \in  \{ 1,2,\ldots, \frac{p_{+}}{2} -3 \}$ or $i - i' \in \{ \frac{p_{+}}{2} + 3, \frac{p_{+}}{2} + 3, \ldots, p_{+}-2 \}$. Construction~\ref{con:conn4} ensures that $i-i' \neq \frac{p_{+}}{2}$, and this establishes that the regularity condition holds for all elements in $P_{+}$. To complete the proof, we need to show that the regularity condition also holds for the pair of elements in $P_{-}$. This immediately follows since 
$$\frac{p_{+}}{2} - 1 - \left( - \frac{p_{+}-2}{4}-1 \right) = \frac{3 p_{+}-2}{4} \neq  \frac{2-3 p_{+}}{4}.$$ 
\end{proof}

Lemmas~\ref{lem:4p},~\ref{lem:4np} and~\ref{lem:t2k3o} lead to the following result.

\begin{theorem}\label{th:k3ev} For even $v$, $A(2,3,v)=\lfloor \frac{v^2}{8} \rfloor$.
\end{theorem}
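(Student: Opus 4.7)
The plan is to prove the theorem by combining the upper bound from Corollary~\ref{cor:UB} with a case-based construction matching it. The upper bound is immediate: for $t=2,\, k=3$ and even $v$, Corollary~\ref{cor:UB} gives
$$A(2,3,v) \leq \frac{\binom{v/2}{1}\binom{v/2}{1}}{\binom{2}{1}\binom{1}{1}} = \frac{v^2}{8},$$
and since $A(2,3,v)$ is an integer, we conclude $A(2,3,v)\leq \lfloor v^2/8\rfloor$. The remaining work is to exhibit constructions that attain this bound, split according to $v\bmod 4$.

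\textbf{Case $v\equiv 2\pmod 4$.} Here $v/2$ is odd, and I would choose the unbalanced labeling with $p_+=v/2+1$ and $p_-=v/2-1$. Because $v/2$ is odd, $p_+$ is even, and $p_-<p_+$, so Lemma~\ref{lem:t2k3o} directly applies and produces a balanced packing of size $p_+p_-/2 = (v^2-4)/8$. A brief arithmetic check shows $\lfloor v^2/8\rfloor = (v^2-4)/8$ in this regime, matching the upper bound.

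\textbf{Case $v\equiv 0\pmod 4$.} Here I would take the balanced labeling $p_+=p_-=v/2$, which is even, so the Latin-rectangle machinery applies. The sub-case $8\mid v$ uses Construction~\ref{con:con4} and Lemma~\ref{lem:4p} (since then $4\mid p_+$), while the sub-case $v\equiv 4\pmod 8$ uses Construction~\ref{con:conn4} and Lemma~\ref{lem:4np} (since then $p_+/2$ is odd). In either sub-case, one obtains a Latin rectangle without fixed points on the ground set of size $v$. By the symmetry condition of Definition~\ref{cl:symm}, each unordered triple in the resulting packing arises from exactly two entries of the $p_+\times p_+$ rectangle, so the packing contains $p_+^2/2 = v^2/8$ triples, again matching the upper bound.

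The construction of the packing from a Latin rectangle without fixed points was already argued in the paragraphs preceding Lemma~\ref{lem:t2k3o}: each entry $\ell_{i,j}=k$ produces the triple $\{i,j,k\}$, the absence of fixed points ensures the three elements are distinct, the regularity and column constraints ensure $2$-bounded intersection, and the labeling by $P_+$ versus $P_-$ ensures discrepancy in $\{0,\pm 1\}$. The only thing that requires care is the bookkeeping for $v\equiv 2\pmod 4$, where one must verify $\lfloor v^2/8\rfloor = (v^2-4)/8$, but this is routine. I do not anticipate a substantive obstacle: every ingredient has been established, and the proof is essentially an assembly of Corollary~\ref{cor:UB} with Lemmas~\ref{lem:t2k3o},~\ref{lem:4p}, and~\ref{lem:4np}.
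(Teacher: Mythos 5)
Your proposal is correct and follows essentially the same route as the paper: the paper likewise obtains the $v\equiv 0\pmod 4$ case from the Latin rectangles of Lemmas~\ref{lem:4p} and~\ref{lem:4np} (each triple arising from exactly two of the $p_+^2$ entries) and the $v\equiv 2\pmod 4$ case from Lemma~\ref{lem:t2k3o} with $p_+=v/2+1$, $p_-=v/2-1$. Your version is slightly more explicit in invoking Corollary~\ref{cor:UB} for the matching upper bound, which the paper leaves implicit.
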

\begin{proof} Since $p_{+} = \frac{v}{2}$, the Latin rectangle contains $p_{+}^2$ entries, and each triple in the packing corresponds to exactly two entries in the rectangle. Hence, when $p_{+}$ is even, the resulting system contains $\frac{p_{+}^2}{2} = \frac{v^2}{8}$ triples. When $p_{+}=\frac{v}{2}$ is odd, the result follows by noting that  $\lfloor \frac{v^2}{8} \rfloor = \frac{ (p_{+}-1)\cdot (p_{+}+1)}{2}$; based on Lemma~\ref{lem:t2k3o}, we only need to assign $\frac{v}{2}-1$ elements to $P_{-}$ and $\frac{v}{2}+1$ elements to $P_{+}$. 
\end{proof}

Theorem~\ref{th:k3ev} and Lemma~\ref{lem:t2k3o} provide constructions for balanced $(2,3,v)$ packings whose cardinality matches that in the bound of Corollary~\ref{cor:UB} whenever $v \in \{4m, 4m+2, 4m+3\}$, where $m$ a positive integer. In what follows, we show how to generate an extremal balanced $(2,3,v)$ packing for $v=4m+1$ by adding one point and $p_{+}$ triples to a $(2,3,4m)$ balanced system. 

\begin{lemma} For $v=4m+1$, $A(2,3,v) = \frac{\lfloor \frac{v}{2} \rfloor \lceil \frac{v}{2} \rceil}{2}$. \end{lemma}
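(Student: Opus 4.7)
The upper bound $A(2,3,4m+1)\le m(2m+1)$ is immediate from Corollary~\ref{cor:UB} with $v=4m+1$, $t=2$, $k=3$, since $\lfloor v/2\rfloor=2m$ and $\lceil v/2\rceil=2m+1$. For the matching lower bound, the plan is to \emph{extend} the extremal balanced $(2,3,4m)$ packing produced by Theorem~\ref{th:k3ev} by adjoining one new point and exactly $m$ new triples. Let $\mathcal{P}$ denote such a packing with $[4m]=P_+^{\text{old}}\cup P_-^{\text{old}}$, $|P_+^{\text{old}}|=|P_-^{\text{old}}|=2m$, and $|\mathcal{P}|=2m^{2}$. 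Introduce a new element $\infty$, label it $+1$, and set $P_+=P_+^{\text{old}}\cup\{\infty\}$, $P_-=P_-^{\text{old}}$. Every block of $\mathcal{P}$ remains balanced under the new labeling.

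The new blocks I would add all have the form $\{\infty,b,b'\}$ with $b,b'\in P_-^{\text{old}}$; each such triple has discrepancy $+1-1-1=-1$, hence is balanced. For the augmented family to form a $(2,3,4m+1)$ packing, two conditions are needed: (i) the pairs $\{b,b'\}$ used are pairwise vertex-disjoint, i.e., they form a perfect matching $M$ on the $2m$-element set $P_-^{\text{old}}$ (this ensures the $m$ new triples intersect pairwise in just $\{\infty\}$ and that each pair $\{\infty,b\}$ is used at most once); and (ii) no pair $\{b,b'\}\in M$ already appears in some block of $\mathcal{P}$ (this ensures each new triple meets each old block in at most one element, since $\infty$ is brand new). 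If such an $M$ exists, the augmented system has $2m^{2}+m=m(2m+1)$ balanced blocks, matching the upper bound.

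The one nontrivial step is exhibiting the matching $M$; this is where I expect the work to concentrate. Using the Latin rectangle construction underlying Theorem~\ref{th:k3ev}, the $--$ pairs covered by $\mathcal{P}$ are exactly those obtained from the negative Latin rectangle set and its cyclic shifts, i.e., pairs whose difference modulo $2m$ lies in a prescribed set $D\subset\mathbb{Z}_{2m}\setminus\{0\}$ (derived from Constructions~\ref{con:con4} and~\ref{con:conn4}, depending on the parity of $m$). The complementary set of uncovered $--$ pairs is therefore a Cayley graph on $\mathbb{Z}_{2m}$ with connection set $\mathbb{Z}_{2m}\setminus(D\cup\{0\})$, which is vertex-transitive and of positive even degree, and—by a direct inspection of which differences remain—decomposes into cycles spanning all of $P_-^{\text{old}}$. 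Taking alternate edges around each such cycle yields the desired perfect matching $M$. A sanity check via counting: each $b\in P_-^{\text{old}}$ lies in $2m-1$ unordered $--$ pairs, of which $e_2(b)=2m-2e_1(b)$ are covered by $+\!-\!-$ blocks of $\mathcal{P}$, leaving $2e_1(b)-1\ge 1$ uncovered neighbors per vertex (since the positive Latin rectangle set is nonempty, $e_1(b)\ge 1$ for every $b$), so there is no isolated vertex or local parity obstruction.

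Once $M$ is produced, collecting $\mathcal{P}$ together with the $m$ new triples $\{\infty,b,b'\}$, $\{b,b'\}\in M$, gives a balanced $(2,3,4m+1)$ packing of size $m(2m+1)$, which matches the upper bound from Corollary~\ref{cor:UB}, completing the proof.
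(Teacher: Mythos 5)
Your overall strategy is the same as the paper's: start from the extremal balanced $(2,3,4m)$ packing of Theorem~\ref{th:k3ev}, adjoin one new point, and add $m$ new balanced triples through that point whose remaining two elements form a matching of same-label pairs not already covered by the old packing. The paper puts the new point in $P_{-}$ and matches on $P_{+}$, by appending one extra column to the Latin rectangle with entries $\ell_{i,\boxed{2m+1}}=\frac{p_{+}}{2}+i$; you put the new point in $P_{+}$ and match on $P_{-}^{\text{old}}$. These are mirror images of one another, and your upper-bound half (Corollary~\ref{cor:UB} gives $m(2m+1)$) is fine.

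The gap is exactly at the step you yourself flag as ``the one nontrivial step'': you never actually exhibit the matching $M$, and the existence argument you sketch does not hold up. The uncovered $--$ pairs do form a circulant graph on $\mathbb{Z}_{2m}$, but its degree need not be even --- for $v=17$ (so $|P_{-}^{\text{old}}|=8$) the covered differences are $1$ and $3$, and the uncovered graph is the $3$-regular circulant with connection set $\{\pm 2,\,4\}$ --- so it does not ``decompose into cycles''; moreover, even for a single uncovered difference $d$ the resulting cycles have length $2m/\gcd(2m,d)$, which can be odd, in which case ``taking alternate edges'' does not produce a matching. Your fallback count (every vertex retains at least one uncovered neighbour) is nowhere near sufficient for a perfect matching. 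What closes the gap --- and what the paper does --- is to name an explicit uncovered difference class that is itself already a perfect matching: the $m$ pairs at difference $\frac{p_{+}}{2}=m$. One then verifies directly from Constructions~\ref{con:con4} and~\ref{con:conn4} that difference $m$ never occurs among the covered same-label pairs (the paper argues this via the parity structure of the rows of the Latin rectangle), so all $m$ pairs $\{x,\,x+m\}$ may be adjoined to the new point. With that concrete choice substituted for the vertex-transitivity argument, your proof goes through and essentially coincides with the paper's.
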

\begin{proof} Let $\ell_{i,j}, \,i,j \in [2m]$ be the entries of a $2m \times 2m$ Latin rectangle obtained according to Construction~1 and 2. We describe next how to append one additional column to these Latin rectangles, indexed by $\boxed{2m+1}$, and show that the resulting $2m \times (2m+1)$ Latin rectangle has no fixed points. Since the proof involves entries from the complete Latin rectangle, we use both boxed and unboxed entries for clarity of exposition. The constructed rectangle leads to a balanced $(2,3,4m+1)$ packing of maximal size using the same procedure outlined in the previous discussion.

Let $\ell'_{i,j}, $ denote the entries of the augmented $2m \times (2m+1)$ Latin rectangle, where $\ell'_{i,j} = \ell_{i,j}$ for $i \in [2m]$, $j \in \{ \boxed{1}, \ldots, \boxed{2m} \}$. We set 
$$ \ell_{i,\boxed{2m+1}} = \frac{p_{+}}{2} + i.$$
It is straightforward to verify that the resulting Latin rectangle has no fixed points since the row with entries $\{ \ell_{0,\boxed{0}}, \ell_{0,\boxed{1}}, \ldots, \ell_{0,\boxed{2m}}\}$ from the $2m \times 2m$ Latin rectangle contains only odd valued numbers in $P_{+}$. More precisely, in the $2m \times 2m$ Latin rectangles, the $i$-th row contains only odd-valued numbers from $P_{+}$ if $i$ is even, and only even-valued numbers if $i$ is odd. Therefore, for any $i$, $\ell_{i,\boxed{2m+1}} \not \in \{ \ell_{i,\boxed{0}}, \ell_{i,\boxed{1}}, \ldots, \ell_{i,\boxed{2m}}\}$. This implies that $\ell_{i,\boxed{2m+1}}$ appears once in each row. Since it is straightforward to see $\ell_{i,\boxed{2m+1}}$ appears once in column $\boxed{2m+1}$, the claim follows.
\end{proof}

Applying the described procedure on our running example results in the following Latin rectangle.
\vspace{0.1in}

%\textcolor{blue}{
\begin{align*}
\begin{tabular}{ c | c c c c c c c c c c  }
  & \boxed{0}   & \boxed{1} & \boxed{2} & \boxed{3} & \boxed{4}  & \boxed{5} & \boxed{6}   & \boxed{7} & \boxed{8}  \\ 
 \hline
 0 & 7              & 5              & \boxed{5}  & \boxed{4} & \boxed{3}  & \boxed{2} & 3               & 1               & 4 \\  
 1 & 6              & \boxed{4} & \boxed{3}  & \boxed{2} & \boxed{1}  & 4              & 2               & 0               & 5\\
 2 & \boxed{3} & \boxed{2} & \boxed{1}  & \boxed{0} & 5               & 3              & 1               & 7               & 6 \\
 3 & \boxed{1} & \boxed{0} & \boxed{7}  & 6              & 4               & 2              & 0               & \boxed{2}  & 7 \\
 4 & \boxed{7} & \boxed{6} & 7               & 5              & 3               & 1              & \boxed{1}  & \boxed{0}  & 0 \\
 5 & \boxed{5} & 0              & 6               & 4              & 2               & \boxed{0} & \boxed{7}  & \boxed{6}  & 1 \\
 6 & 1              & 7              & 5               & 3              & \boxed{7}  & \boxed{6} & \boxed{5}  & \boxed{4}  & 2 \\
 7 & 0              & 6              & 4               & \boxed{6} & \boxed{5}  & \boxed{4} & \boxed{3}  & 2               & 3
\end{tabular}.
\end{align*}

The results are summarized in the next theorem.

\begin{theorem} For any $v \geq 8$, $$A(t,k,v) = \Big \lfloor \frac{\lfloor \frac{v}{2} \rfloor \lceil \frac{v}{2} \rceil}{2} \Big \rfloor .$$
\end{theorem}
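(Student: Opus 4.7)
The plan is to combine the upper bound from Corollary~\ref{cor:UB} with the explicit constructions already developed, handling each residue of $v$ modulo $4$ separately. Specializing Corollary~\ref{cor:UB} to $t=2$ and $k=3$ gives
$$
A(2,3,v) \;\leq\; \frac{\binom{\lceil v/2\rceil}{1}\binom{\lfloor v/2\rfloor}{1}}{\binom{2}{1}\binom{1}{1}} \;=\; \frac{\lfloor v/2\rfloor\lceil v/2\rceil}{2},
$$
and since $A(2,3,v)$ is an integer, the right-hand side may be replaced by its floor. This delivers the upper bound matching the claim.

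For the matching lower bound I would case-split on $v \bmod 4$. If $v = 4m$ or $v = 4m+2$, Theorem~\ref{th:k3ev} gives $A(2,3,v)=\lfloor v^2/8\rfloor$, and a short arithmetic check shows $\lfloor v^2/8\rfloor = \lfloor \lfloor v/2\rfloor\lceil v/2\rceil /2\rfloor$ in both subcases ($2m^2$ and $2m^2+2m$, respectively). If $v=4m+1$, the lemma immediately preceding this theorem produces a balanced packing of size $\lfloor v/2\rfloor\lceil v/2\rceil/2 = m(2m+1)$, which is an integer because $\lfloor v/2\rfloor=2m$ is even. If $v=4m+3$, I would set $p_+=2m+2$ and $p_-=2m+1$, so that $p_+$ is even and $p_->p_+$ fails while $p_-<p_+$ holds; Lemma~\ref{lem:t2k3o} then yields a balanced packing of size $p_+p_-/2 = (2m+2)(2m+1)/2=(m+1)(2m+1)$, again agreeing with the formula.

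Because each of the four constructions already meets the upper bound from Corollary~\ref{cor:UB}, no additional combinatorial work is needed; this theorem is essentially a packaging statement. The only thing to verify carefully is the arithmetic identification of $\lfloor v^2/8\rfloor$ and $p_+p_-/2$ with the unified expression $\lfloor \lfloor v/2\rfloor\lceil v/2\rceil/2 \rfloor$ in each residue class, and the fact that the parity hypothesis of Lemma~\ref{lem:t2k3o} (namely $p_+$ even and $p_-<p_+$) is satisfied in the $v=4m+3$ branch. The hypothesis $v \geq 8$ is merely present to avoid trivially small corner cases where one of the referenced constructions does not apply.
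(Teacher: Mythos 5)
Your proposal is correct and matches the paper's intent exactly: the paper gives no separate proof, presenting this theorem as a summary of Theorem~\ref{th:k3ev}, the preceding lemma for $v=4m+1$, and Lemma~\ref{lem:t2k3o} (for $v\equiv 3 \pmod 4$), all measured against the upper bound of Corollary~\ref{cor:UB} specialized to $t=2$, $k=3$. Your case-by-case arithmetic verification and the check of the parity hypotheses of Lemma~\ref{lem:t2k3o} supply precisely the bookkeeping the paper leaves implicit (note the theorem's statement should read $A(2,3,v)$, as you correctly interpreted).
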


\subsection{Balanced systems with parameters $(t,k=t+1,v)$}

In the previous section, we established that for the special case $t=2$ and $k=t+1$, extremal balanced families may be constructed using specialized Latin rectangles. In what follows, we demonstrate using a different proof technique that \emph{asymptotically} extremal balanced families may be obtained for all choices of $t$ and $k=t+1$ via a \emph{restricted sum} construction.

Let ${\mathbb Z}_v$ denote the integers modulo $v$. First, suppose that $v=2\ell$ is even. Consider $k$-subsets of ${\mathbb Z}_v$ that sum to $0\mod v$, where $k$ is fixed. Note that when any $k-1$ integers in ${\mathbb Z}_v$ are chosen, the last integer is uniquely determined.  

When $k = 4m$, choose any $2m-1$ different even integers from ${\mathbb Z}_v$, and choose $2m$ different odd integers from ${\mathbb Z}_v$. The integer that makes the sum equal to $0$ must be even; hence, either the set of $4m$ elements contains the same number of even and odd integers, or the last element repeats an element already chosen. In the latter case, we exclude the set from the choice of $4m-1$ elements from further consideration.  

When $k = 4m+1$, choose any $2m$ different even integers from ${\mathbb Z}_v$, and choose $2m$ different odd integers from ${\mathbb Z}_v$. The integer that makes the sum equal to $0$ must be even; hence, either the set of $4m+1$ elements contains one more even than odd integer, or that the last element repeats an element already chosen. In the latter case, the set from the choice of $4m$ elements is excluded.

When $k = 4m+2$, choose any $2m+1$ different even integers from ${\mathbb Z}_v$, and choose $2m$ different odd integers from ${\mathbb Z}_v$. The integer that makes the sum equal to $0$ must be odd; hence we must have the same number of even and odd, or that the last element repeats an element already chosen. As before, in the latter case, the set from the choice of $4m$ elements is excluded.

When $k = 4m+3$, choose any $2m+1$ different even integers from ${\mathbb Z}_v$, and choose $2m+1$ different odd integers from ${\mathbb Z}_v$. The integer that makes the sum equal to $0$ must be odd; hence, our selection, if not including repeated elements, has one less even than odd number.

By labeling every even integer by a $+1$ and odd integer by $-1$, we obtain desired discrepancy values for each block. The number of blocks obtained via this procedure may be computed as follows.
%$\frac{2m}{\ell - 2m}.$
%denoted $x_{4m+3}$, we fail if and only if $$x_{4m+3} = -2x_1 - x_2 - x_3 - \cdots - x_{4m+1}, x_{4m+3} = x_1 - 2x_2 - x_3 - \cdots - x_{4m+1}, x_{4m+3} = x_1 - x_2 - 3x_3 - \cdots - x_{4m+1}$$ or so on. 
%Thus, there are 2m ways to pick the last element (since we have to pick an element of the form above and the element needs to be from odd) and get a failure and, at this point, there are $\ell-2m$ elements to choose from. Either way, i'll let you fix it, but give me a little more time on the paper so that I can update it before you take a pass.
%\textcolor{red}{I think Hoang has a point here, and maybe we add some text to clarify the fraction of bad sets.} 
As an example, consider the case $k=4m+3$; we have $\binom{\ell}{2m+1}\binom{\ell}{2m+1}$ sets in our initial selection, and these selections were completed in any of $2m+2$ ways by choosing different subsets of the odd elements. This leads to $\binom{\ell}{2m+1}\binom{\ell}{2m+1}/(2m+2)$ different sets, provided that the last element added is not a repetition. Hence, we only need to determine how many times the forced last element causes a repetition and our selection fails. Roughly, there are $\ell$ odd numbers, $2m+1$ selected odd numbers that may lead to a repetition, so that the fraction of the times that we fail is $\frac{2m+1}{\ell}$. Since $\ell=\frac{v}{2}$, whenever $k$ is fixed and $v \rightarrow \infty$, this failure rate approaches zero. A similar analysis can be performed for other selections of $k$. Hence, asymptotically, the sum construction is extremal.

As an example, consider the case $t=2, \, k=3$ and $v=12$. In this case, we get the following $15$ blocks:
\begin{align*}
&\{ 0,1,11 \}, \{ 0,3,9\}, \{ 0,5,7\}, \{ 1,2,9 \},\\
&\{ 1,4,7 \}, \{ 1,5,6 \}, \{ 1,3,8 \}, \{ 2,3,7 \}, \\
&\{ 3,4,5 \}, \{ 3,10,11\}, \{ 4,9,11\}, \{5,8,11\},\\
&\{ 5,9,10 \}, \{ 6,7,11\}, \{ 7,8,9\}.
\end{align*} 
We miss all pairs when both initial choices are even, and in general, we miss all initial pairs of the form $\{x,-2x\}$ as $x-2x+x = 0$. Furthermore, we miss the triples obtained from 
$$\{ 1,10\}, \{2,5\}, \{2,11\}, \{3,6\}, \{6,9\}, \{7,10\}.$$ 
These are the only triples omitted, because if in $\{x,y,z\}$ we have $x = z$, then we must have $y = -2x$. By comparison, the extremal construction from the previous section produces $\frac{v^2}{8}=18$ triples.

These observations easily extend to any choice of even $v$ and any $k=t+1$, but they do not yield simple counterparts for the case of odd $v$.
%What about $n$ odd?  This is problematic because we can no longer talk about odd and even elements in ${\mathbb Z}_n$. But we come very close by forming a solution on $n+1$ points and deleting one of them.  

\subsection{Additional choices of design parameters with optimal constructions}
\subsection{Constructions based on transversal designs}

Next, we return to the approach outlined at the beginning of our discussion, in which balanced families of set with small intersections are obtained using transversal designs (akin to the Babai-Frankl method). The next lemma addresses the case $k=4$ and $t=3$ and provides extremal balanced sets. It also rigorously specifies how to identify additional blocks to the design so that the result is optimal.  

\begin{lemma}\label{lem:t34} Suppose that there exists a $TD(t,k,v)$ with $v=4m$ and $m$ even. Then,
$$A(3,4,v) = \frac{\nchoosek{v/2}{2}  \nchoosek{v/2}{1}}{2}=\frac{v^2(v/2-1)}{16}.$$   
\end{lemma}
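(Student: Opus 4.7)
The plan is to meet the Corollary~\ref{cor:UB} bound $A(3,4,v) \leq \nchoosek{v/2}{2}\nchoosek{v/2}{1}/2 = m^2(2m-1)$ (using $v=4m$) by augmenting the given $TD$ with blocks built from $1$-factorizations of complete graphs. Let $G_1,G_2,G_3,G_4$ be the four groups of size $m$ of the $TD$. I label $G_1 \cup G_2$ by $+1$ and $G_3 \cup G_4$ by $-1$, so that each of the $m^3$ transversal blocks (having one point per group) is automatically balanced with discrepancy $0$.

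Since the TD contributes only $m^3$ blocks and the bound requires $2m^3-m^2$, I must supply an additional $m^2(m-1)$ balanced blocks. My approach is to introduce ``one-sided'' blocks $\{a_1,a_2,b_1,b_2\}$ with $a_1,a_2 \in G_i$ and $b_1,b_2 \in G_j$, for each of the four pairs $(i,j) \in \{1,2\}\times\{3,4\}$. For each pair I add exactly $m^2(m-1)/4$ such blocks, which is an integer because $m$ is even. This count is dictated by the packing constraint: the $\nchoosek{m}{2}\cdot m = m^2(m-1)/2$ ``mixed'' $3$-subsets having two points from $G_i$ and one from $G_j$ must each appear in at most one block, and each augmented block of the proposed form covers exactly two of them, so $m^2(m-1)/4$ blocks per pair saturate this constraint.

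To construct the $m^2(m-1)/4$ blocks for a fixed $(i,j)$, I use the fact that $m$ even guarantees $K_m$ admits a $1$-factorization. Let $F^{(i)}_1,\ldots,F^{(i)}_{m-1}$ and $F^{(j)}_1,\ldots,F^{(j)}_{m-1}$ be $1$-factorizations of $K_m$ on $G_i$ and $G_j$, respectively. For each index $h \in \{1,\ldots,m-1\}$ and each pair of edges $(\{a_1,a_2\},\{b_1,b_2\}) \in F^{(i)}_h \times F^{(j)}_h$, include the block $\{a_1,a_2,b_1,b_2\}$. This produces $(m-1)(m/2)^2 = m^2(m-1)/4$ blocks per pair, which when summed over the four pairs and added to the $m^3$ TD blocks gives exactly $m^3 + m^2(m-1) = m^2(2m-1)$ blocks.

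The main obstacle is verifying that any two distinct blocks of the combined family share at most two elements. The nontrivial case is two distinct augmented blocks from the same pair $(i,j)$ sharing a $3$-subset: say they share $\{a_1,a_2,b_1\}$. Then the edge $\{a_1,a_2\}$ lies in a unique factor $F^{(i)}_h$, forcing both blocks to use $F^{(j)}_h$ on $G_j$; since $F^{(j)}_h$ is a matching, the edge containing $b_1$ is uniquely determined, so the two blocks must coincide---a contradiction. The symmetric case of a shared $\{a_1,b_1,b_2\}$ is handled identically by swapping the roles of $G_i$ and $G_j$. Augmented blocks from distinct pairs overlap in at most one group (hence in at most $2$ points); a TD block meets any augmented block in at most one point from $G_i$ and one from $G_j$; and TD-TD intersections are at most $2$ by the defining property of the $TD$. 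This establishes that the construction is a balanced $(3,4,v)$-packing of size $m^2(2m-1)$, meeting the Corollary~\ref{cor:UB} bound with equality.
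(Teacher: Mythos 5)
Your proposal is correct and follows essentially the same route as the paper's proof: the same $\pm 1$ labeling by pairs of groups, the same augmented blocks obtained by pairing index-matched $1$-factors of two oppositely labeled groups, and the same count $m^3 + m^2(m-1) = m^2(2m-1)$ matching Corollary~\ref{cor:UB}. The only difference is that you spell out the intersection check (a shared $3$-subset forces a common factor index and hence a common matching edge), which the paper dismisses as straightforward.
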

\begin{proof} For simplicity, we assume that points in our design are represented as ordered pairs in $V = \mathbb{Z}_m \times [4],$ where the labels $\{{0,1,2,3\}}$ refer to the groups in the transversal design. According to the labeling approach from the introduction, we set $P_{+} = [m] \times \{0,1\}$ and $P_{-} = [m] \times \{2,3\}$. There are $m^3$ blocks in the transversal design, all of which are balanced based on our choice of labels and the the inherent properties of the transversal design. 

We now describe formally how to add new blocks to the blocks in the design to obtain extremal families. The blocks are of the form 
$$ \{a_1, a_2, b_{1}, b_2 \},$$ 
where $a_1, a_2 \in [m] \times \{0,1\}$, $b_{1}, b_2 \in [m] \times \{2,3\}$. Let $\Phi^{(g)} = \{\Phi^{(g)}_1, \ldots, \Phi^{(g)}_{m-1} \}$ be a $1$-factorization of points in group $g$ of the transversal design, where $g=0,1,2,3$. Select $\{ a_1, a_2 \} \in \Phi^{(g_1)}_i$ and  $\{ b_{1}, b_2 \} \in \Phi^{(g_2)}_i$, where $g_1 \in \{{0,1\}}$ and $g_2 \in \{{2,3\}}$, and $1 \leq i \leq m-1$.
%$$ \B_{2} = \Bigg \{ \{ a_1, a_2, B_{+}, b_2 \} : \{ a_1, a_2 \} \in \Phi^{(g_1)}_f, \{ B_{+}, b_2 \} \in \Phi^{(g_2)}_f, g_1 \in \{0,1\}, g_2 \in \{2,3\}  \Bigg \}.$$

It is straightforward to see that there are $2m \times \nchoosek{m}{2}$ blocks of the above form. There are four ways to select the pair $g_1 \in \{0,1\}$ and $g_2 \in \{2,3\}$; for $g_1$ and $g_2$ fixed, there are $\nchoosek{m}{2}$ choices for $a_1, a_2$. Since $\Phi^{(g_1)}$ is a $1$-factorization, there exists some $1 \leq i \leq m$, such that $\{a_1, a_2\} \in \Phi^{(g_1)}_i$. Next, we select a pair $\{b_{1}, b_2\}$ from the set $\Phi^{(g_2)}_i$. Since $\Phi^{(g_2)}_f$ contains $\frac{m}{2}$ disjoint edges, there are $\frac{m}{2}$ choices for $\{b_{1}, b_2\}$. Thus, in total, there are
$$ 4 \times \nchoosek{m}{2} \times \frac{m}{2} = 2m \times \nchoosek{m}{2},$$
choices for the new blocks. It is straightforward to see that no two newly added blocks intersect in more than two points and that the blocks are balanced.

%Now, let $\B = \B_{+} \cup \B_{2}$. Clearly $\B_{+} \cap \B_{2} = \emptyset$ since $\B_{+}$ contains exactly one element from each of $\mathbb{Z}_{m} \times 0, \mathbb{Z}_{m} \times 1, \mathbb{Z}_{m} \times 2, \mathbb{Z}_{m} \times 3$, while $\B_2$ contains two elements from $\mathbb{Z}_{m} \times g_1$, $g_1 \in \{0,1\},$ and two elements from $\mathbb{Z}_{m} \times g_2,$ where $g_2 \in \{2,3\}$. Hence, 
The union of the blocks from the transversal design and the newly constructed blocks has 
$$m^3 + 2m \times \nchoosek{m}{2}  = \frac{\nchoosek{v/2}{2}  \nchoosek{v/2}{1}}{2}$$ 
elements, which proves the claimed bound.
\end{proof}
%We provide an illustration of the construction from the previous lemma.

\begin{example} Suppose that $S_{TD}(3,4,16)$ is a transversal design whose with points of the form $\mathbb{Z}_4 \times \{0,1,2,3\}$, and blocks 
\begin{align*}
\Big \{ \{ (i_0,0), (i_1,1), (i_2,2), (i_3,3) \} : i_0,i_1,i_2,i_3 \in \mathbb{Z}_4, \, i_3 = i_0 + i_1 + i_2 \Big \}.
\end{align*}
Based on the outlined construction, we set $P_{+} = \mathbb{Z}_4 \times \{0,1\}$ and $P_2 = \mathbb{Z}_4 \times \{2,3\}$. Since there are four choices for each of $i_0, i_1, i_{2}$ it follows that we have $4^3 = 64$ blocks.

The factorization of interest reads as 
\begin{align*}
\Phi^{(g)} = \{ \Phi^{(g)}_1, \Phi^{(g)}_2, \Phi^{(g)}_3 \} = \Bigg \{ &\Big\{ \{(0,g),(1,g)\}, \{(2,g),(3,g)\} \Big\}, \\
&\Big \{ \{(0,g), (2,g) \}, \{ (1,g), (3,g) \} \Big \}, \\
&\Big \{ \{(0,g), (3,g) \}, \{ (1,g), (2,g) \} \Big \}\Bigg \},
\end{align*}
where $g \in \{0,1,2,3\}$. Thus, for a $g_1 \in \{0,1\}$ and a $g_2 \in \{2,3\}$, we obtain $12$ blocks:
\begin{align*}
\Bigg \{  &\Big \{ (0,g_1), (1,g_1),(0,g_2), (1,g_2) \Big \}, \Big \{ (0,g_1), (1,g_1), (2,g_2), (3,g_2)  \Big \}, \Big \{ (2,g_1), (3,g_1), (0,g_2), (1,g_2)  \Big \},\\
&\Big \{ (2,g_1), (3,g_1), (2,g_2), (3,g_2) \Big \}, \Big \{ (0,g_1), (2,g_1), (0,g_2), (2,g_2) \Big \}, \Big \{ (0,g_1), (2,g_1), (1,g_2), (3,g_2) \Big \}, \\
&\Big \{ (1,g_1), (3,g_1), (0,g_2), (2,g_2) \Big \}, \Big \{ (1,g_1), (3,g_1), (1,g_2), (3,g_2) \Big \}, \Big \{ (0,g_1), (3,g_1), (0,g_2), (3,g_2) \Big \}, \\
&\Big \{ (0,g_1), (3,g_1), (1,g_2), (2,g_2) \Big \}, \Big \{ (1,g_1), (2,g_1), (0,g_2), (3,g_2) \Big \}, \Big \{ (1,g_1), (2,g_1), (1,g_2), (2,g_2) \Big \} \Bigg \}.
\end{align*}
Since there are four choices for $g_1, g_2$, it follows that the total number of added blocks is $48$. Thus, we have a balanced $(3,4,16)$ packing with $112$ blocks, which is optimal.
\end{example}

%\textcolor{red}{Stopped here}
When $v$ is a power of two, the construction of Lemma~\ref{lem:t34} can be simplified further. 
%This idea will be expounded upon to produce nearly optimal constructions in the next section.

\begin{example} The points are of the form $\mathbb{F}_{2^m} \times \{0,1\}$, and $P_{+} = \{ (x,0) : x \in \mathbb{F}_{2^m} \}$ and $P_{-} = \{ (x,1) : x \in \mathbb{F}_{2^m} \}$. The additional blocks are generated according to
$$\Big \{ \{ (a_1,0), (a_2,0), (b_{1},1), (b_2,1) \} : a_1, a_2, b_{1}, b_2 \in \mathbb{F}_{2^m}, b_2 = a_1 + a_2 + b_{1} \Big \}.$$
In order to verify the validity of the construction, one needs to prove that for any choice of distinct $a_1,a_1$ and $b_{1} \neq b_2$, there exists a $b_2$ that satisfies the defining constraint. Since $a_1, a_2, b_{1}, b_2$ belong to a field with characteristic two, if $b_{1}= b_2$, it follows that $a_1 = a_2$, which is impossible. Therefore, the blocks in the transversal design along with the blocks from the above construction result in a $(t=3,k=4,v)$ extremal balanced system whenever $v$ is a power of two.
\end{example}

\subsection{Constructions based on maximal disjoint Steiner systems}

When the sets in ${\mathcal F}$ can be partitioned into classes $\mathcal{F}_1, \dots, \mathcal{F}_n$ so that $(V,\mathcal{F}_i)$ is a $(t',k,v)$ packing for each $1 \leq i \leq n$, we say that $(V,{\mathcal F})$ is $t'$-{\em partitionable} with partition classes $\mathcal{F}_1, \dots, \mathcal{F}_n$. Let $(V_1,{\mathcal F}^1)$ be a $(t_1,k_1,v_1)$ packing that is $t_1$-partitionable with partition classes $\mathcal{F}^1_1, \dots, \mathcal{F}^1_n$. Similarly, let $(V_2,{\mathcal F}^2)$ be a $(t_2,k_2,v_2)$ packing that is $t_2$-partitionable with partition classes $\mathcal{F}^2_1, \dots, \mathcal{F}^2_n$. Suppose that $V_1$ and $V_2$ are disjoint.  
Form a new packing with blocks $\{ B \cup D : B \in \mathcal{F}^1_i, D \in \mathcal{F}^2_i, 1 \leq i \leq n\}$.
This packing has $v= v_1 + v_2$ points.  Each block has $k = k_1 + k_2$ points.  
Two distinct blocks can share at most $\max(k_2+t_1-1,k_1+ t_2-1)$ points. Hence, the resulting structure is a $(\max(k_2+t_1,k_1+ t_2), k_1 + k_2, v_1 + v_2)$ packing. By setting $V_1=P_+$ and $V_2=P_{-}$, each block has discrepancy $|k_1-k_2|$. Therefore, we need to choose $k_1$ and $k_2$ to have values as close as possible. 

\subsubsection{Balanced sets with parameters $(2,3,v)$}

A $1$-factorization is a $(2,2,2m)$ packing that is $1$-partitionable into $2m-1$ classes.  
A set of $2m-1$ points, each forming a block of size one, is a $(1,1,2m-1)$ packing that is $0$-partitionable into $2m-1$ classes. Hence, using this factorization approach we can obtain a $(2,3,4m-1)$ packing with discrepancy $1$ having $m(2m-1)$ blocks, or roughly, $\frac{1}{8}\, v^2$ blocks. In comparison, a Steiner triple system would have $(4m-1)(4m-2)/6$ blocks, which roughly equals $\frac{8}{3}m^2$ ($\frac{1}{6}\, v^2$). 

\subsubsection{Balanced sets with parameters $(3,4,v)$}

A $1$-factorization is a $(2,2,2m)$ packing that is $1$-partitionable into $2m-1$ classes. Hence, we have a $(3,4,4m)$ packing with discrepancy $0$ and $m^2(2m-1)$ blocks, or roughly, $\frac{1}{32}\, v^3$ blocks. In comparison, a Steiner quadruple system would have $(4m)(4m-1)(4m-2)/24$ blocks, which roughly equals $\frac{8}{3}m^3$ ($\frac{1}{24}\, v^3$).

\subsubsection{Balanced sets with parameters $(4,5,v)$}

A $1$-factorization is a $(2,2,2m)$ packing that is $1$-partitionable into $2m-1$ classes.  
A \emph{large set of Steiner triple system or maximal disjoint Steiner system}~\cite{lu1983large,teirlinck1975maximum} is a $(3,3,v)$ packing that is $2$-partitionable into $v-2$ classes, or equivalently, a set of $v-2$ Steiner triple systems that have disjoint block sets. Such a systems exists whenever $(2m+1 =) \, v \equiv 1,3 \pmod{6}$ (with six exceptions, see~\cite{lu1983large,teirlinck1975maximum}).
So we obtain a $(4,5,4m+1)$ packing with discrepancy $1$ having $(2m-1)m^2(2m+1)/6$ blocks, which is roughly  $\frac{1}{384}v^4$ blocks. A Steiner quintuple system would have $(4m+1)(4m)(4m-1)(4m-2)/120$ blocks, which roughly equals $\frac{32}{15}m^4$ ($\frac{1}{120}v^4$).

\subsubsection{Balanced sets with parameters $(5,6,v)$}

A large set of Steiner triple systems is a $(3,3,v)$ system that is 2-partitionable into $v-2$ classes, when $(2m+1 =) \, v \equiv 1,3 \pmod{6}$ (with six exceptions, see~\cite{lu1983large,teirlinck1975maximum}). So we obtain a $(5,6,4m+2)$ packing with discrepancy $0$ having roughly $(2m-1)\left ( \frac{(2m+1)m}{3} \right ) ^2 \approx \frac{8}{3}m^5$ blocks. A Steiner sixtuple system would have $(4m+1)(4m)(4m-1)(4m-2)/120$ blocks, which roughly equals $\frac{32}{15}m^4$ ($\frac{1}{120}v^4$).

\subsubsection{Balanced sets for other parameter choices}

Next, we consider two construction for general $t$ and $k$. For this construction, once again we make use of the notion of maximal disjoint Steiner system~\cite{kramer1974intersections}. 
%We focus on \emph{maximal disjoint} $S(t,k,v)$ systems which represent a collection of mutually disjoint Steiner systems of maximum size. 
We denote the decomposition of the maximal disjoint Steiner system by $\mathcal{M}(t,k,v)=\{ \mathcal{M}_1(t,k,v), \ldots, \mathcal{M}_{T}(t,k,v)\}$ where for any $i \in [v-T]$, $\mathcal{M}_i(t,k,v)$ is a $(t,k,v)$ Steiner system, and for any two distinct $i, j \in [T]$, $\mathcal{M}_{i}(t,k,v) \cap \mathcal{M}_{j}(t,k,v) = \emptyset$. Here,
$$T = \frac{\nchoosek{v}{k}}{\nchoosek{v}{t} \Big / \nchoosek{k}{t}}.$$ 
We first consider the case when $k$ is even.

\begin{construction}\label{con:mds} Suppose that there exists a maximal disjoint Steiner system $\mathcal{M}(t,k,v)$. Let $P_{+} = \{ (1,0), (2,0), \ldots, (v,0) \}$ and $P_{-} = \{ (1,1), (2,1), \ldots, (v,1) \}$ and form the following sets
\begin{align*}
\S_{D}(t,t+1,2v) = \Big \{ &\big \{ (a_1, 0), \ldots, (a_k,0),(B_{+}, 1), \ldots, (b_k,1) \big \} : \\
&\{a_1, \ldots, a_k \}, \{B_{+}, \ldots, b_k\} \in \mathcal{M}_i(t,k,v), i \in  T \Big \}.
\end{align*}
\end{construction}

The following theorem is a straightforward result that is based on Construction~\ref{con:mds}.

\begin{theorem}\label{th:mds} Suppose that there exists a maximal disjoint Steiner system $\mathcal{M}(t,k,v)$. Then, there exists an extremal $(t+k,2k,2v)$ balanced packing. \end{theorem}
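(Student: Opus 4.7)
The plan is to verify that Construction~\ref{con:mds} produces a balanced packing meeting the bound of Corollary~\ref{cor:UB}, by establishing three properties in turn: each super-block has cardinality $2k$ with discrepancy zero, any two distinct super-blocks intersect in at most $t+k-1$ points, and the total count matches the Corollary~\ref{cor:UB} upper bound. Balance is immediate, since each super-block is a disjoint union $\{(a,0):a\in B_1\}\cup\{(b,1):b\in B_2\}$ with $B_1,B_2\in\mathcal{M}_i$, contributing $k$ points to $P_{+}$ and $k$ points to $P_{-}$ under the labeling that sends $P_{+}$ to $+1$ and $P_{-}$ to $-1$, so $D_L(A)=k-k=0$.

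For the intersection bound, let $A=\{(a,0):a\in B_1\}\cup\{(b,1):b\in B_2\}$ and $A'=\{(a,0):a\in B_1'\}\cup\{(b,1):b\in B_2'\}$ be two distinct super-blocks. Because the second coordinate separates $P_{+}$ from $P_{-}$, we have $|A\cap A'|=|B_1\cap B_1'|+|B_2\cap B_2'|$. I would split into two cases. If both pairs come from the same class $\mathcal{M}_i$, the Steiner property gives $|B_r\cap B_r'|\le t-1$ unless $B_r=B_r'$, and since $A\ne A'$ at most one of the equalities $B_1=B_1'$, $B_2=B_2'$ can hold; this yields $|A\cap A'|\le k+(t-1)=k+t-1$. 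If the pairs come from different classes $\mathcal{M}_i$ and $\mathcal{M}_j$, the disjointness of block sets forces $B_r\ne B_r'$ for $r=1,2$, so $|B_r\cap B_r'|\le k-1$ and $|A\cap A'|\le 2(k-1)$. Matching the target $k+t-1$ in this cross-class case requires $k\le t+1$; specializing to $k=t+1$, the regime implicit from the preceding subsections, both cases meet exactly $2t=k+t-1$.

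For extremality, distinct ordered pairs $(B_1,B_2)\in\mathcal{M}_i\times\mathcal{M}_i$ yield distinct super-blocks, and no collision can occur across classes since the underlying $k$-subsets differ by the partition property of the large set. The total count is therefore $T\cdot|\mathcal{M}_i|^2$, and substituting $T=\nchoosek{v}{k}\nchoosek{k}{t}/\nchoosek{v}{t}$ together with $|\mathcal{M}_i|=\nchoosek{v}{t}/\nchoosek{k}{t}$ yields $\nchoosek{v}{k}\nchoosek{v}{t}/\nchoosek{k}{t}$; for $k=t+1$ this simplifies to $\nchoosek{v}{t+1}\nchoosek{v}{t}/(t+1)$. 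Evaluating Corollary~\ref{cor:UB} at $(t',k',v')=(2t+1,2(t+1),2v)$ with $\lfloor t'/2\rfloor=t$, $\lceil t'/2\rceil=t+1$, and $\lceil k'/2\rceil=\lfloor k'/2\rfloor=t+1$ produces the same expression, confirming that the construction is extremal.

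The principal obstacle is the cross-class intersection analysis: the clean bound $t+k-1$ only emerges in the regime $k=t+1$, which aligns with the focus of all prior constructions and with the parameter ranges in which maximal disjoint Steiner systems are actually known to exist. Extending the argument to larger $k$ would require additional structural control over how blocks from different partition classes of $\mathcal{M}(t,k,v)$ can overlap, which is not generally available.
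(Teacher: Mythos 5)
The paper itself offers no written proof of this theorem; it is dismissed as a ``straightforward result'' of Construction~\ref{con:mds}, so your verification is supplying exactly the argument the authors leave implicit, and it is sound. Your balance and same-class intersection arguments are correct, and your block count $T\cdot|\mathcal{M}_i|^2=\binom{v}{k}\binom{v}{t}/\binom{k}{t}$ is right. More importantly, the obstacle you isolate in the cross-class case is a genuine defect in the theorem as stated, not a weakness of your proof: since the $T$ classes of a maximal disjoint (large) Steiner system together exhaust \emph{all} $k$-subsets of $[v]$, two blocks drawn from different classes are merely distinct $k$-sets and may share $k-1$ points, so two super-blocks from different classes can meet in $2(k-1)$ points; the packing requirement $2(k-1)\le t+k-1$ then forces $k\le t+1$. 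Your second observation closes the loop: even granting the intersection property, the count $\binom{v}{k}\binom{v}{t}/\binom{k}{t}$ coincides with the Corollary~\ref{cor:UB} bound for parameters $(t+k,2k,2v)$ only when $\lfloor (t+k)/2\rfloor=t$ and $\lceil (t+k)/2\rceil=k$, i.e.\ again only when $k\le t+1$. So the theorem should carry the hypothesis $k=t+1$, yielding an extremal $(2t+1,2t+2,2v)$ balanced packing; this restriction is consistent with the only instance the paper actually uses (the corollary deducing a $(5,6,2v)$ packing from $\mathcal{M}(2,3,v)$) and with the fact that known large sets are essentially confined to $k=t+1$. In short: your proof is correct in the regime where the statement is true, and you have correctly identified that the statement is overstated for $k>t+1$.
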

%\begin{proof} The proof easily follows by observing that according to Construction~\ref{con:mds} we can generate a set of size $2k$ that belongs to $\S_{D}(2t+1,2(t+1),2v)$ by first selecting any $k$ elements from $P_{+}$ followed by $t$ elements, say $B_{-}$, from $P_{-}$. Since $\Phi(t,k,v)$ is a maximal disjoint Steiner system, there exists a $s \in v-t$ such that $B_{+} \in \Phi_s(t,k,v)$. Since $\Phi_s(t,k,v)$ is a Steiner system, it follows that there exists exactly one block $b \in \Phi_s(t,k,v)$ where $B_{-} \subset b$.
%\end{proof}

A drawback of Construction~\ref{con:mds} is that relatively few constructions for maximal disjoint Steiner systems exist, with most results focusing on Steiner triple systems. It was shown in~\cite{lu1983large} that if the order of $2$ modulo $v-2$ is an odd number then there exists a maximal disjoint Steiner triple system. In addition, it is known that if there exists a maximal disjoint Steiner triple system with point set $[v]$, then for $\pi$ a prime number there also exists a maximal disjoint Steiner triple system over $[\pi v + 2]$. An immediate consequence of Theorem~\ref{th:mds} is the following result.

\begin{lemma} Suppose there exists a maximal disjoint system $\mathcal{M}(2,3,v)$. Then, there exists a $(5,6,2v)$ balanced packing of maximum size.  \end{lemma}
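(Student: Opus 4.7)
The plan is to obtain the lemma as a direct specialization of Theorem~\ref{th:mds}. Instantiating that theorem with $t=2$ and $k=3$ and applying Construction~\ref{con:mds} to the given maximal disjoint Steiner triple system $\mathcal{M}(2,3,v)$ immediately produces a balanced $(5,6,2v)$ packing: each block is the union of two triples drawn from a common $\mathcal{M}_i$, placed on the $P_+$ and $P_-$ sides respectively, so every block has exactly three positively labeled and three negatively labeled points and therefore discrepancy zero. The only task remaining is to verify the ``maximum size'' claim, i.e., that the number of blocks meets the upper bound of Corollary~\ref{cor:UB}.

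For the upper bound I would instantiate Corollary~\ref{cor:UB} with $t=5$, $k=6$, ground set of size $2v$, and the natural labeling $p_+ = p_- = v$. Since $\lfloor 5/2\rfloor = 2$, $\lceil 5/2\rceil = 3$, and $\lfloor 6/2\rfloor = \lceil 6/2\rceil = 3$, the bound collapses to
$$A(5,6,2v) \;\leq\; \frac{\binom{v}{2}\binom{v}{3}}{\binom{3}{2}\binom{3}{3}} \;=\; \frac{v^{2}(v-1)^{2}(v-2)}{36}.$$
For the lower bound I would count the blocks produced by Construction~\ref{con:mds}: a maximal disjoint Steiner triple system on $v$ points partitions all $\binom{v}{3}$ triples into $T = v-2$ disjoint systems $\mathcal{M}_i(2,3,v)$, each containing $v(v-1)/6$ blocks. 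Since Construction~\ref{con:mds} pairs each triple of $\mathcal{M}_i$ on the $P_+$ side with each triple of the same $\mathcal{M}_i$ on the $P_-$ side, the total number of blocks equals
$$T\cdot|\mathcal{M}_i|^{2} \;=\; (v-2)\left(\frac{v(v-1)}{6}\right)^{2} \;=\; \frac{v^{2}(v-1)^{2}(v-2)}{36},$$
which matches the upper bound exactly and establishes extremality.

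There is no serious obstacle; the only subtlety meriting care is the pairwise intersection argument inherited from Theorem~\ref{th:mds}. Within a single $\mathcal{M}_i$, two distinct blocks of the construction overlap in at most $1+1 = 2$ points, because any two triples of a Steiner triple system share at most $t-1 = 1$ point, applied on each side. Across distinct systems $\mathcal{M}_i$ and $\mathcal{M}_j$, the two positive-side triples are distinct (the systems are blockwise disjoint) yet could share up to two points, and likewise on the negative side, giving a worst case of $2+2 = 4$ shared points and hence respecting the $(5,6,2v)$ packing bound $5-1 = 4$. This confirms that the combinatorial object produced is a legitimate packing, and together with the counting above completes the proof.
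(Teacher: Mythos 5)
Your route is the same as the paper's: the paper states this lemma as an immediate consequence of Theorem~\ref{th:mds} with $t=2$, $k=3$, and your instantiation of Construction~\ref{con:mds}, the block count $T\cdot|\mathcal{M}_i|^2=(v-2)\bigl(v(v-1)/6\bigr)^2$, and the match with Corollary~\ref{cor:UB} are all correct. One intermediate claim in your intersection analysis is wrong, though: for two distinct blocks built from the \emph{same} $\mathcal{M}_i$, you assert an overlap of at most $1+1=2$ points, but the construction ranges over all ordered pairs of triples from $\mathcal{M}_i$, so two distinct blocks may share an entire triple on one side (take $B=B'$, $D\neq D'$), giving $3+1=4$ common points --- this is exactly the $\max(k_2+t_1-1,\,k_1+t_2-1)$ bound the paper records for the partitionable-packing composition. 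Your conclusion survives only because $4=t-1$ for a $(5,6,2v)$ packing, i.e., the same worst case you already identified for the cross-system pairs; you should correct the within-system case rather than rely on the (false) bound of $2$.
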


Using a simple modification of Construction~\ref{con:mds}, we arrive at the following result.

\begin{lemma} Suppose that there exists a maximal disjoint system $\mathcal{M}(2,3,p_{+})$ and that $p_{-} = p_{+} - 1$ is an even number. Then, there exists a $(4,5,p_{+}+p_{-})$ balanced packing of maximum size.
\end{lemma}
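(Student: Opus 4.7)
The plan is to adapt Construction~\ref{con:mds} to the asymmetric setting where $|P_+| = p_+$ is odd and $|P_-| = p_- = p_+ - 1$ is even. Since a size-$5$ block must have discrepancy $\pm 1$, it must split as $3{+}2$ across the two color classes; I will take three elements from $P_+$ and two from $P_-$ in every block. The key numerical coincidence driving the construction is that a maximal disjoint Steiner triple system on $p_+$ points decomposes into exactly $T = p_+ - 2$ parallel classes, while a $1$-factorization of the complete graph $K_{p_-}$ has exactly $p_- - 1 = p_+ - 2 = T$ factors. This matching of indices is what enables the two combinatorial objects to be coupled without conflict.

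Concretely, I would fix a decomposition $\mathcal{M}(2,3,p_+) = \{\mathcal{M}_1, \ldots, \mathcal{M}_T\}$ (which exists by hypothesis) and, since $p_-$ is even, a $1$-factorization $\Phi^- = \{\Phi^-_1, \ldots, \Phi^-_T\}$ of $K_{p_-}$ on the vertex set $P_-$. For each index $i \in [T]$, each triple $A \in \mathcal{M}_i$, and each edge $e \in \Phi^-_i$, I form the $5$-block $A \cup e$. Balancedness is immediate: every such block contains three elements of $P_+$ and two of $P_-$, so it has discrepancy $+1$. The total number of blocks is
\[
T \cdot |\mathcal{M}_i| \cdot |\Phi^-_i| \;=\; (p_+-2)\cdot \frac{p_+(p_+-1)}{6}\cdot \frac{p_-}{2} \;=\; \frac{p_+(p_+-1)p_-(p_+-2)}{12} \;=\; \frac{\binom{p_+}{2}\binom{p_-}{2}}{3},
\]
which, after using $p_- - 1 = p_+ - 2$, matches the upper bound of Corollary~\ref{cor:UB} with $t=4$, $k=5$, $v = p_+ + p_-$.

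The main obstacle is verifying the $t=4$ intersection property, i.e.\ that every $4$-subset of $[v]$ lies in at most one block. The covered $4$-subsets have signature $(3,1)$ or $(2,2)$ across $(P_+,P_-)$; signatures $(4,0)$, $(1,3)$, and $(0,4)$ do not appear in any block and are therefore vacuous. For a $(3,1)$-subset $\{a_1,a_2,a_3,b\}$, the triple $\{a_1,a_2,a_3\}$ lies in a \emph{unique} class $\mathcal{M}_i$ (because $\mathcal{M}(2,3,p_+)$ partitions all triples of $P_+$), after which $b$ belongs to a unique edge of the $1$-factor $\Phi^-_i$, pinning down the block. The delicate case is $(2,2)$: for $\{a_1,a_2,b_1,b_2\}$, the edge $\{b_1,b_2\}$ belongs to a unique factor $\Phi^-_i$, which fixes $i$; then, because $\mathcal{M}_i$ is itself an STS, the pair $\{a_1,a_2\}$ determines a unique triple of $\mathcal{M}_i$ and hence a unique third point $a_3$. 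Here it is essential that the \emph{same} index $i$ is used on both sides of the construction, so that conclusions drawn from one combinatorial object transfer to the other; this is precisely what the matching of $T = p_+ - 2 = p_- - 1$ makes possible. Having established balancedness, the packing property, and the optimal count, the proof is complete.
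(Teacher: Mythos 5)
Your proposal is correct and follows essentially the same route as the paper: couple the $T = p_{+}-2$ disjoint Steiner triple systems on $P_{+}$ with the $p_{-}-1 = T$ one-factors of a $1$-factorization of $K_{p_{-}}$, index by index, to form $3{+}2$ blocks. The paper states this construction and defers the verification to the reasoning of Theorem~\ref{th:mds}, whereas you spell out the intersection check by signature and the count against Corollary~\ref{cor:UB} explicitly; both arguments are sound.
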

\begin{proof} There are $T = p_{+}-2$ disjoint Steiner triple systems 
$\{ \mathcal{M}_1(2,3,p_{+}), \ldots,\mathcal{M}_{p_{+}-2}(2,3,p_{+}) \},$ where $p_{+}-2=p_{-}-1$. If $p_{-}$ is even, then there exists a $1$-factorization of $\Phi=\{ \Phi_1, \ldots, \Phi_{p_{+}-2} \}$ of $K_{p_{-}}$. Let 
\begin{align*}
\Big \{ &\big \{ (a_1, 0), (a_2,0), (a_3,0),(b_{1}, 1), (b_2,1) \big \} : \\
&\{a_1, a_2, a_3 \} \in \mathcal{M}_i(2,3,p_{+}), \{b_{1}, b_2\} \in \Phi_i, i \in [T] \Big \}.
\end{align*}
The proof that the described set of blocks constitutes a $(4,5,p_{+}+p_{-})$ extrema balanced packing follows from the same line of reasoning as used in Theorem~\ref{th:mds}.
\end{proof}

\section{Acknowledgement}
The authors gratefully acknowledge discussions with Joao Ribeiro, Imperial College, London. 
The work was supported in part by the NSF grants CCF 1526875, 1816913, and 1813729. 

\bibliographystyle{plain}
\bibliography{intersectingsets}

\begin{thebibliography}{10}

\bibitem{armoni1996discrepancy}
Roy Armoni, Michael Saks, Avi Wigderson, and Shiyu Zhou.
\newblock Discrepancy sets and pseudorandom generators for combinatorial
  rectangles.
\newblock In {\em Foundations of Computer Science, 1996. Proceedings., 37th
  Annual Symposium on}, pages 412--421. IEEE, 1996.

\bibitem{BabaiFrankl2004}
L.~Babai and P.~Frankl.
\newblock {\em Linear Algebra Methods in Combinatorics with Applications to
  Geometry and Computer Science}.
\newblock Department of Computer Science, University of Chicago, New York,
  1992.

\bibitem{beck1981balanced}
J{\'o}zsef Beck.
\newblock Balanced two-colorings of finite sets in the square i.
\newblock {\em Combinatorica}, 1(4):327--335, 1981.

\bibitem{Bose1939}
R.~C. Bose.
\newblock On the construction of balanced incomplete block designs.
\newblock {\em Annals of Eugenics}, 9:353--399, 1939.

\bibitem{bose1952orthogonal}
Raj~C Bose and K~Ak Bush.
\newblock Orthogonal arrays of strength two and three.
\newblock {\em The Annals of Mathematical Statistics}, pages 508--524, 1952.

\bibitem{bujtas2015transversal}
Csilla Bujt{\'a}s and Zsolt Tuza.
\newblock Transversal designs and induced decompositions of graphs.
\newblock {\em arXiv preprint arXiv:1501.03518}, 2015.

\bibitem{colbourn2010crc}
Charles~J Colbourn and Jeffrey~H Dinitiz.
\newblock {\em CRC handbook of combinatorial designs}.
\newblock CRC press, 2010.

\bibitem{colbourn1999bicoloring}
Charles~J Colbourn, Jeffrey~H Dinitz, and Alexander Rosa.
\newblock Bicoloring steiner triple systems.
\newblock {\em the electronic journal of combinatorics}, 6(1):25, 1999.

\bibitem{doerr2001lattice}
Benjamin Doerr.
\newblock Lattice approximation and linear discrepancy of totally unimodular
  matrices--extended abstract.
\newblock In {\em SIAM-ACM Symposium on Discrete Algorithms}, pages 119--125,
  2001.

\bibitem{doerr2003multicolour}
Benjamin Doerr and Anand Srivastav.
\newblock Multicolour discrepancies.
\newblock {\em Combinatorics, Probability and Computing}, 12(4):365--399, 2003.

\bibitem{erdos1946asymptotic}
Paul Erd{\"o}s and Irving Kaplansky.
\newblock The asymptotic number of latin rectangles.
\newblock {\em American Journal of Mathematics}, 68(2):230--236, 1946.

\bibitem{gabrys2017asymmetric}
Ryan Gabrys, Han~Mao Kiah, and Olgica Milenkovic.
\newblock Asymmetric lee distance codes for dna-based storage.
\newblock {\em IEEE Transactions on Information Theory}, 63(8):4982--4995,
  2017.

\bibitem{hedayat2012orthogonal}
A~Samad Hedayat, Neil James~Alexander Sloane, and John Stufken.
\newblock {\em Orthogonal arrays: theory and applications}.
\newblock Springer Science \& Business Media, 2012.

\bibitem{kiah2016codes}
Han~Mao Kiah, Gregory~J Puleo, and Olgica Milenkovic.
\newblock Codes for dna sequence profiles.
\newblock {\em IEEE Transactions on Information Theory}, 62(6):3125--3146,
  2016.

\bibitem{kramer1974intersections}
Earl~S Kramer and Dale~M Mesner.
\newblock Intersections among steiner systems.
\newblock {\em Journal of Combinatorial Theory, Series A}, 16(3):273--285,
  1974.

\bibitem{lovasz1986discrepancy}
L{\'a}szl{\'o} Lov{\'a}sz, Joel Spencer, and Katalin Vesztergombi.
\newblock Discrepancy of set-systems and matrices.
\newblock {\em European Journal of Combinatorics}, 7(2):151--160, 1986.

\bibitem{lu1983large}
Jia-Xi Lu.
\newblock On large sets of disjoint steiner triple systems i,ii, iii, iv.
\newblock {\em J. Comb. Theory, Ser. A}, 34(2):140--146, 1983.

\bibitem{matouvsek1993discrepancy}
Ji{\v{r}}{\'\i} Matou{\v{s}}ek, Emo Welzl, and Lorenz Wernisch.
\newblock Discrepancy and approximations for bounded vc-dimension.
\newblock {\em Combinatorica}, 13(4):455--466, 1993.

\bibitem{milazzo2003strict}
Lorenzo Milazzo, Zsolt Tuza, and Vitaly Voloshin.
\newblock Strict colorings of steiner triple and quadruple systems: a survey.
\newblock {\em Discrete Mathematics}, 261(1-3):399--411, 2003.

\bibitem{milenkovic2018exabytes}
Olgica Milenkovic, Ryan Gabrys, Han~Mao Kiah, and SM~Hossein~Tabatabaei Yazdi.
\newblock Exabytes in a test tube.
\newblock {\em IEEE Spectrum}, 55(5):40--45, 2018.

\bibitem{muthukrishnan2012optimal}
S~Muthukrishnan and Aleksandar Nikolov.
\newblock Optimal private halfspace counting via discrepancy.
\newblock In {\em Proceedings of the forty-fourth annual ACM symposium on
  Theory of computing}, pages 1285--1292. ACM, 2012.

\bibitem{nicking}
H.~Zhao A.~Hernandez O.~Milenkovic, K.~Tabatabei.
\newblock Nick-based storage in native nucleic acids.
\newblock {\em US Patent, UIUC2017-170-02(US) // MBHB 18-1205}, 2018.

\bibitem{rothvoss2013approximating}
Thomas Rothvo{\ss}.
\newblock Approximating bin packing within o (log opt* log log opt) bins.
\newblock In {\em Foundations of Computer Science (FOCS), 2013 IEEE 54th Annual
  Symposium on}, pages 20--29. IEEE, 2013.

\bibitem{saks2000low}
Michael Saks, Aravind Srinivasan, Shiyu Zhou, and David Zuckerman.
\newblock Low discrepancy sets yield approximate min-wise independent
  permutation families.
\newblock {\em Information Processing Letters}, 73(1-2):29--32, 2000.

\bibitem{Skolem1958}
T.~Skolem.
\newblock Some remarks on the triple systems of {S}teiner.
\newblock {\em Mathematica Scandinavica}, 6:273--280, 1958.

\bibitem{ygm17}
O.~Milenkovic S.M.T.~Yazdi, R.~Gabrys.
\newblock Portable and error-free dna-based data storage.
\newblock {\em Scientific Reports}, 7(5011), 2017.

\bibitem{solymosi2009incidences}
Jozsef Solymosi.
\newblock Incidences and the spectra of graphs.
\newblock In {\em Combinatorial Number Theory and Additive Group Theory}, pages
  299--314. Springer, 2009.

\bibitem{Stinson2004}
D.~R. Stinson.
\newblock {\em Combinatorial Designs: Constructions and Analysis}.
\newblock Springer, New York, 2004.

\bibitem{stinson1981general}
Douglas~R Stinson.
\newblock A general construction for group-divisible designs.
\newblock {\em Discrete Mathematics}, 33(1):89--94, 1981.

\bibitem{teirlinck1975maximum}
Luc Teirlinck.
\newblock On the maximum number of disjoint triple systems.
\newblock {\em Journal of Geometry}, 6(1):93--96, 1975.

\bibitem{trevisan2001extractors}
Luca Trevisan.
\newblock Extractors and pseudorandom generators.
\newblock {\em Journal of the ACM}, 48(4):860--879, 2001.

\bibitem{yazdi2017portable}
SM~Hossein~Tabatabaei Yazdi, Ryan Gabrys, and Olgica Milenkovic.
\newblock Portable and error-free dna-based data storage.
\newblock {\em Scientific reports}, 7(1):5011, 2017.

\bibitem{yazdi2018mutually}
SM~Hossein~Tabatabaei Yazdi, Han~Mao Kiah, Ryan Gabrys, and Olgica Milenkovic.
\newblock Mutually uncorrelated primers for dna-based data storage.
\newblock {\em IEEE Transactions on Information Theory}, 2018.

\bibitem{yazdi2015rewritable}
SM~Hossein~Tabatabaei Yazdi, Yongbo Yuan, Jian Ma, Huimin Zhao, and Olgica
  Milenkovic.
\newblock A rewritable, random-access dna-based storage system.
\newblock {\em Scientific reports}, 5:14138, 2015.

\end{thebibliography}

\appendices

\section{Proof of Lemma~\ref{lem:odd}}\label{app:lemodd}

Suppose that, to the contrary, there exists a $(t,k,v)$ balanced system $S(t,k,v)$, which according to repeated application of Claim~\ref{cl:recurse} means that there exists a balanced system $S(3,k-t+3,v-t+3)$. Next, we proceed similarly as before, except that we select one more element $e_1 \in P_{-}$ and let 
 $$ S(2,k-t+2,v-t+2) = \Big \{ B \setminus e_1 : B \in S(3,k-t+3,v-t+3) \text{ and } e_1 \in B \Big \}. $$
% \textcolor{red}{does this shortening of a SS produces for sure another ss?}
Let $\hat{P}_{+}$ denote the elements in $S(2,k-t+2,v-t+2)$ labeled $+1$ and let $\hat{P}_{-}$ denote the elements labeled $-1$. Again, let $\hat{p}_+ = |\hat{P}_{+}|, \hat{p}_- = |\hat{P}_{-}|$, and $\hat{k} = \lceil \frac{k-t+2}{2} \rceil$. Note that under this setup, for any $B \in S(2,k-t+2, v-t+2)$, the discrepancy of $B$ is either $-2, -1,$ or $0$. More specifically, if $k-t+2$ is odd, then $|B \cap \hat{P}_{+}|=\hat{k}$. If $k-t+2$ is even, then $|B \cap \hat{P}_{+}| \in \{ \hat{k}, \hat{k}-1 \}$.

The next claim is an analogue of Claim~\ref{cl:k=p}.

\begin{claim}\label{cl:k=p2} One has $\hat{p}_+ > \hat{k}$. \end{claim}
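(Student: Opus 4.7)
The plan is to mirror the strategy of Claim~\ref{cl:k=p} but account carefully for the extra reduction step performed in the odd-$t$ setting. The key is to track exactly how many elements of $P_+$ survive each reduction, and then compare against the hypothesis $p_+ > \frac{k+1}{2}$.

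First I would count how $\hat{p}_+$ is produced from $p_+$. Starting from the balanced $S(t,k,v)$ with odd $t$, the proof of Lemma~\ref{lem:odd} applies Claim~\ref{cl:recurse} exactly $(t-3)/2$ times in order to arrive at a balanced $S(3,k-t+3,v-t+3)$. Each application deletes one element of $P_+$ (together with one of $P_-$). The subsequent final step selects $e_1 \in P_-$ and restricts to blocks containing $e_1$ (then removes $e_1$); crucially, this step does not touch any $+1$-labeled point. Consequently
\[
\hat{p}_+ \;=\; p_+ \;-\; \tfrac{t-3}{2}.
\]

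Next I would invoke the hypothesis $p_+ > \tfrac{k+1}{2}$ to obtain
\[
\hat{p}_+ \;>\; \tfrac{k+1}{2} - \tfrac{t-3}{2} \;=\; \tfrac{k-t+4}{2}.
\]
Because $\hat{p}_+$ is an integer, a short parity check (depending on whether $k-t+2$ is even or odd) gives $\hat{p}_+ \ge \lceil (k-t+5)/2 \rceil$. Comparing with
\[
\hat{k} \;=\; \left\lceil \tfrac{k-t+2}{2}\right\rceil \;\le\; \tfrac{k-t+3}{2},
\]
we find $\hat{p}_+ > \hat{k}$ in both parity cases, which is the desired conclusion.

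There is no real obstacle here; the only subtlety is remembering that the odd-$t$ reduction performs one additional restriction step beyond the even-$t$ case of Claim~\ref{cl:k=p}, and that this extra step removes a negatively labeled point only, so it leaves $\hat{p}_+$ unchanged. The bookkeeping of the floors and ceilings must be handled for the two parities of $k-t$, but these are routine.
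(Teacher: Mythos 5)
Your proposal is correct and follows essentially the same route as the paper: count that the $(t-3)/2$ applications of Claim~\ref{cl:recurse} each remove exactly one $+1$-labeled point (and the final restriction to blocks through $e_1\in P_-$ removes none), so $\hat{p}_+=p_+-\frac{t-3}{2}$, then combine with $p_+>\frac{k+1}{2}$ to beat $\hat{k}=\lceil\frac{k-t+2}{2}\rceil$. The only difference is cosmetic: your parity check is not actually needed, since $\hat{p}_+>\frac{k-t+4}{2}>\frac{k-t+3}{2}\ge\hat{k}$ already closes the argument.
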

\begin{proof} If $S(3,k-t+3,v-t+3)$ is obtained from $S(t,k,v)$ by applying the procedure from Claim~\ref{cl:recurse}, then $\hat{p}_{1} = p_{+} - \frac{t-3}{2}$. Thus, if $p_{+} > \frac{k+1}{2}$, which is one of the assumptions in the lemma, then $\hat{p}_+ > \hat{k} = \lceil \frac{k-t+2}{2} \rceil$.
\end{proof}

Next, suppose first that $k-t+2$ is odd so that for any $B \in S(2,k-t+2,v-t+2)$, $|B \cap \hat{P}_{+}| = \hat{k}$ and $|B \cap \hat{P}_{-}| = \hat{k}-1$. Then, for a Steiner system $S(2,k-t+2,v-t+2)$, counting arguments similar to those used in the previous lemma would lead to
$$\frac{ \nchoosek{\hat{p}_+}{2}}{\nchoosek{\hat{k}}{2}} = \frac{\hat{p}_+  \hat{p}_-}{\hat{k}  (\hat{k}-1)},$$
which means that $\hat{p}_+-1 = \hat{p}_-.$ Note that we can count the number of elements in $S(2,k-t+2,v-t+2)$ by also considering all pairs from $\hat{P}_{-}$, which implies 
$$ \frac{ \nchoosek{\hat{p}_-}{2}}{\nchoosek{\hat{k-1}}{2}} = \frac{ \nchoosek{\hat{p}_+}{2}}{\nchoosek{\hat{k}}{2}}.$$
If $\hat{p}_+-1 = \hat{p}_-$, then this implies that $\hat{p}_+ = \hat{k}$ which contradicts Claim~\ref{cl:k=p2}.

Next, we consider the case where $k-t+2$ is even. Let $B_{+} \subset S(2,k-t+2,v-t+2)$ denote the set of blocks in $S(2,k-t+2,v-t+2)$ with $\hat{k}+1$ elements in $\hat{P}_{+}$ and that have $\hat{k}-1$ elements from $\hat{P}_{-}$. Similarly, let $B_{-} \subset S(2,k-t+2,v-t+2)$ denote the set of blocks in $S(2,k-t+2,v-t+2)$ with $\hat{k}$ elements in $\hat{P}_{+}$ and $\hat{k}$ elements from $\hat{P}_2$. Using similar ideas as the previous lemma, we arrive at
\begin{align}\label{eq:evB1}
|B_{+}| \nchoosek{\hat{k}+1}{ 2} + |B_{-}| \nchoosek{\hat{k}}{2} = \nchoosek{\hat{p}_+}{2},
\end{align}
and 
\begin{align}\label{eq:evB2}
|B_{+}| \nchoosek{\hat{k}-1}{ 2} + |B_{-}| \nchoosek{\hat{k}}{2} = \nchoosek{\hat{p}_-}{2}.
\end{align}
From (\ref{eq:evB1}) and (\ref{eq:evB2}), we have that
\begin{align*}
|B_{+}| = \frac{\nchoosek{\hat{p}_+}{2} - \nchoosek{\hat{p}_-}{2}}{\nchoosek{\hat{k}+1}{2} - \nchoosek{\hat{k}-1}{2}},
\end{align*}
and
\begin{align*}
|B_{-}| = \frac{\nchoosek{\hat{p}_+}{2}  - |B_{+}| \nchoosek{\hat{k}+1}{2}}{\nchoosek{\hat{k}}{2}}.
\end{align*}

We now follow the same logic as in the previous proof. First, notice that according to (\ref{eq:evB1}) and (\ref{eq:evB2}), the total number of blocks in $S(2,k-t+2,v-t+2)$ is

$$|B_{+}| + |B_{-}| = \frac{\hat{p}_+^2(\hat{k}-1) + \hat{p}_2(\hat{p}_2 - 1)\hat{k} + \hat{p}_+ - \hat{p}_+ \hat{k}}{(2\hat{k}-1)\hat{k}(\hat{k}-1)}.$$
We also know that since $S(2,k-t+2,v-t+2)$ is a Steiner system, we the number of blocks in $S(2,k-t+2,v-t+2)$ is $\frac{(\hat{p}_++\hat{p}_2)(\hat{p}_++\hat{p}_2-1)}{2\hat{k}(2\hat{k}-1)}$, which implies
$$ \frac{\hat{p}_+^2(\hat{k}-1) + \hat{p}_2(\hat{p}_2 - 1)\hat{k} + \hat{p}_+ - \hat{p}_+ \hat{k}}{(2\hat{k}-1)\hat{k}(\hat{k}-1)} = \frac{(\hat{p}_++\hat{p}_2)(\hat{p}_++\hat{p}_2-1)}{2\hat{k}(2\hat{k}-1)}.$$
Solving for $\hat{k}$ gives
$$ \hat{k} = \frac{2\hat{p}_+\hat{p}_- + \hat{p}_-^2 - \hat{p}_+^2 + \hat{p}_+ - \hat{p}_-}{2\hat{p}_+\hat{p}_- - \hat{p}_-^2 - \hat{p}_+^2 + \hat{p}_+ + \hat{p}_-}.$$

Recall from Claim~\ref{cl:cases} that we only have two cases left to consider: The case where $\hat{p}_+ = \hat{p}_-$ and where $\hat{p}_+ = \hat{p}_- + 1$. If $\hat{p}_+ = \hat{p}_-$, then $\hat{k} = \hat{p}_+$ which is a contradiction of Claim~\ref{cl:k=p2}. Otherwise, if $\hat{p}_+ = \hat{p}_- + 1$, then $\hat{k} = \hat{p}_+-1$, and we arrive at a contradiction in this case as well.

\end{document}